\documentclass{llncs}

\usepackage[nocompress]{cite}
\usepackage{mathrsfs}
\usepackage{amsmath, amssymb}
\usepackage{amscd}
\usepackage{subfigure,color}
\usepackage{graphicx,graphics}
\usepackage{setspace}

%margins
%\usepackage[left=1in, right=1in, top=1in, bottom=1in]{geometry}

\newcommand{\pig}[1]{\mathrm{int}(#1)}

\newcommand{\tr}[1]{\mathcal{T}_r#1}
\newcommand{\cliquemap}{\mathcal{K}}
\newcommand{\treerep}[1]{\tr{#1} = (T{#1},\cliquemap{#1})}

\newcommand{\pc}{\mathcal{C}}

\newcommand{\xt}[1]{\mathcal{T}_x#1}
\newcommand{\xtree}[1]{\xt{#1} = (T{#1},\phi{#1})}

\newcommand{\isomorphic}{\cong}

\newcommand{\comment}[1]{}

\newtheorem{observation}{Observation}

\pagestyle{plain}
\begin{document}

\title{Unique Perfect Phylogeny Characterizations via Uniquely Representable Chordal Graphs}

\author{Rob Gysel}
\institute{Department of Computer Science\\University of California, Davis\\1 Shields Avenue, Davis CA 95616, USA\\ \email{rsgysel@ucdavis.edu}}

\maketitle
\begin{abstract}
The perfect phylogeny problem is a classic problem in computational biology, where we seek an unrooted phylogeny that is compatible with a set of qualitative characters.
Such a tree exists precisely when an intersection graph associated with the character set, called the partition intersection graph, can be triangulated using a restricted set of fill edges.
Semple and Steel used the partition intersection graph to characterize when a character set has a unique perfect phylogeny.
Bordewich, Huber, and Semple showed how to use the partition intersection graph to find a maximum compatible set of characters.
In this paper, we build on these results, characterizing when a unique perfect phylogeny exists for a subset of partial characters.
Our characterization is stated in terms of minimal triangulations of the partition intersection graph that are uniquely representable, also known as ur-chordal graphs.
Our characterization is motivated by the structure of ur-chordal graphs, and the fact that the block structure of minimal triangulations is mirrored in the graph that has been triangulated.
\end{abstract}

\section{Introduction}
An \emph{$X-$tree} is a pair $\xtree{}$ where $T$ is a tree and $\phi$ is a map from $X$ to the nodes of $T$, such that every node of $T$ with degree two or one is mapped to by $\phi$.
We will call the range of $\phi$ the \emph{labeled nodes} of $\xt{}$, and these nodes are \emph{labeled by} $\phi$.
The \emph{underlying tree} of $\xt{}$ is $T$.
An $X-$tree is \emph{free} if $\phi$ is a bijection to the leaves of $T$, and it is \emph{ternary} if every internal node of $T$ has degree three.
Given $A \subseteq X$, we will use $\xt{}(A)$ to denote the minimal subtree of $T$ containing the nodes $\phi(A)$.
Two subtrees $\xt{}(A)$ and $\xt{}(A')$ of $T$ \emph{intersect} if they have one or more nodes in common, and if $v$ is a common node of $\xt{}(A)$ and $\xt{}(A')$ we say that $\xt{}(A)$ and $\xt{}(A')$ \emph{intersect at} $v$.

A \emph{partial character} for $X$ is a partition $\chi = A_1 | A_2 | \ldots | A_r$ of a subset $X' \subseteq X$.
Each $A_i$ is called a \emph{cell} of $\chi$.
If $\xt{}(A)$ and $\xt{}(A')$ do not intersect for every pair of distinct cells $A$ and $A'$ of $\chi$, then $\xt{}$ \emph{displays} $\chi$.
A \emph{perfect phylogeny} for a set of partial characters $\pc$ is an $X-$tree $\xt{}$ that displays each character in $\pc$.
When $\pc$ has a perfect phylogeny, we also say that $\pc$ is \emph{compatible}.
The \emph{perfect phylogeny problem} (also called the \emph{character compatibility problem}) is to determine if a set of partial characters is compatible.

The perfect phylogeny problem reduces to a graph theoretical problem that we detail now.
Given a set of characters $\pc$, one can construct the \emph{partition intersection graph} $\pig{\pc}$ as follows. 
The vertex set of $\pig{\pc}$ is 
\begin{equation*}
	\{(A,\chi) \mid \chi \in \pc \mbox{ and } A \mbox{ is a cell of } \chi\} \enspace, 
\end{equation*}	
and there is an edge between two vertices $(A,\chi)$ and $(A',\chi')$ if and only if $A$ and $A'$ have non-empty intersection.
For a vertex $(A,\chi)$ of $\pig{\pc}$, $A$ is the \emph{cell} of $(A,\chi)$ and $\chi$ is the \emph{character} of $(A,\chi)$.
Observe that if $\chi = A_1 | A_2 | \ldots | A_r$ is a partial character, then every pair of distinct vertices $(A,\chi)$ and $(A',\chi)$ are non-adjacent in $\pig{\pc}$.

A graph is \emph{chordal} if every cycle of length four or more has a \emph{chord}, that is, an edge between vertices of the cycle that do not appear consecutively in the cycle.
In general $\pig{\pc}$ is not a chordal graph, and we are interested in adding edges to $\pig{\pc}$ to obtain a chordal supergraph $H$ of $\pig{\pc}$ that is called a \emph{triangulation} of $\pig{\pc}$.
The added edges are called \emph{fill edges}.
If no subset of the fill edges yields a triangulation of $\pig{\pc}$, it is a \emph{minimal triangulation} of $\pig{\pc}$.
When each fill edge is of the form $(A,\chi)(A',\chi')$ and $\chi \neq \chi'$, the resulting triangulation is a \emph{proper triangulation} of $\pig{\pc}$.
The following classic result reduces the question of determining compatibility to finding proper triangulations of the partition intersection graph.
It was originally phrased in terms of proper triangulations, but from the definitions it follows that $\pig{\pc}$ has a proper triangulation if and only if it has a proper minimal triangulation.

\begin{theorem}\label{thm_buneman}\cite{B74,M83,S92}
	Let $\pc$ be a set of qualitative characters. Then $\pc$ is compatible if and only if $\pig{\pc}$ has a proper minimal triangulation.
\end{theorem}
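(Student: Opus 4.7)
The plan is to prove the equivalence by moving between $X$-tree representations and subtree intersection representations of chordal graphs. The classical fact that a graph is chordal if and only if it is the intersection graph of a family of subtrees of a tree will serve as the bridge in both directions.

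For the forward direction, suppose $\pc$ is displayed by a perfect phylogeny $\xt{} = (T,\phi)$. To each vertex $(A,\chi)$ of $\pig{\pc}$ I would associate the subtree $\xt{}(A)$ of $T$, and define a graph $H$ on $V(\pig{\pc})$ by making $(A,\chi)$ adjacent to $(A',\chi')$ exactly when $\xt{}(A)$ and $\xt{}(A')$ intersect. Every edge of $\pig{\pc}$ survives: if $x \in A \cap A'$ then $\phi(x)$ lies in both subtrees. The graph $H$ is chordal because it is a subtree intersection graph. Since $\xt{}$ displays each $\chi$, no two distinct cells of the same character yield intersecting subtrees, so every fill edge joins vertices of different characters; hence $H$ is a proper triangulation, and any minimal triangulation obtained by removing redundant fill edges remains proper.

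For the backward direction, let $H$ be a proper minimal triangulation of $\pig{\pc}$. Since $H$ is chordal, it admits a clique tree $T'$ whose nodes are the maximal cliques of $H$ and in which, for every vertex $v$, the maximal cliques containing $v$ induce a subtree. For each $x \in X$, the set $W_x = \{(A,\chi) \in V(\pig{\pc}) : x \in A\}$ is a clique of $\pig{\pc}$ and hence of $H$, so it lies in some maximal clique $K_x$; set $\phi(x) = K_x$ and, after contracting unlabeled nodes of degree at most two, obtain an $X$-tree $\xt{} = (T,\phi)$. To verify that $\xt{}$ displays $\chi \in \pc$, fix a cell $A$ of $\chi$ and observe that $\xt{}(A)$ is the image under contraction of the subtree of $T'$ induced by maximal cliques containing $(A,\chi)$, which is connected by the clique-intersection property. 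Because $H$ is proper, distinct cells $A, A'$ of the same character $\chi$ remain non-adjacent in $H$, so no maximal clique of $H$ contains both $(A,\chi)$ and $(A',\chi)$, forcing the corresponding subtrees to be disjoint.

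The main technical obstacle is the cleanup and verification in the backward direction: one must choose $K_x$ consistently when several maximal cliques contain $W_x$, check that the contraction of unlabeled low-degree nodes yields a genuine $X$-tree, and confirm that the disjointness of subtrees associated to distinct cells of a character is preserved under contraction. Minimality of the triangulation, together with the clique-intersection property, is what ensures a consistent choice of $K_x$ and the desired preservation; the forward direction, by contrast, is essentially a direct application of Gavril's subtree characterization and requires no delicate bookkeeping.
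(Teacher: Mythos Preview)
The paper does not supply its own proof of this theorem; it is stated with citations to Buneman, Meacham, and Steel and then used as a black box. So there is no in-paper proof to compare against directly.

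That said, your argument is correct and follows the standard route. The two constructions you invoke---associating to each vertex $(A,\chi)$ the subtree $\xt{}(A)$ of a displaying $X$-tree in the forward direction, and reading off an $X$-tree from a clique tree of a proper triangulation in the backward direction---are exactly the operations the paper formalizes in Section~3 under the names ``derived tree representation'' and ``induced $X$-tree'' (see Observation~\ref{obs_derived_properties} and the discussion preceding Lemma~\ref{lem_mintri_equals_xtint}). Your backward direction is essentially the content of Lemma~\ref{lem_triangulation_cliquetree_displays_legalcharacters}, which the paper cites from \cite{BHS05}.

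One small imprecision worth noting: you assert that $\xt{}(A)$ \emph{equals} the image under contraction of the clique-tree subtree $\tr{}(A,\chi)$, but in general only the containment $\xt{}(A) \subseteq \tr{}(A,\chi)$ holds; equality genuinely requires minimality of the triangulation (this is the paper's Lemma~\ref{lem_mintri_subtrees}). Fortunately containment is all you need to conclude that $\xt{}(A)$ and $\xt{}(A')$ are disjoint, so the argument survives. Relatedly, your closing remark that minimality is what secures the backward direction overstates its role: properness alone is enough to build a perfect phylogeny from any clique tree, since $W_x$ is a clique of $H$ regardless and any candidate node works for $\phi(x)$. Minimality only becomes essential when one wants the sharper statements $\xt{}(A) = \tr{}(A,\chi)$ or $H = \pig{\pc,\xt{}}$, which the present theorem does not require.
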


\begin{figure}[t]
\centering
\subfigure[$\xt{}$]{
   \scalebox{.7}{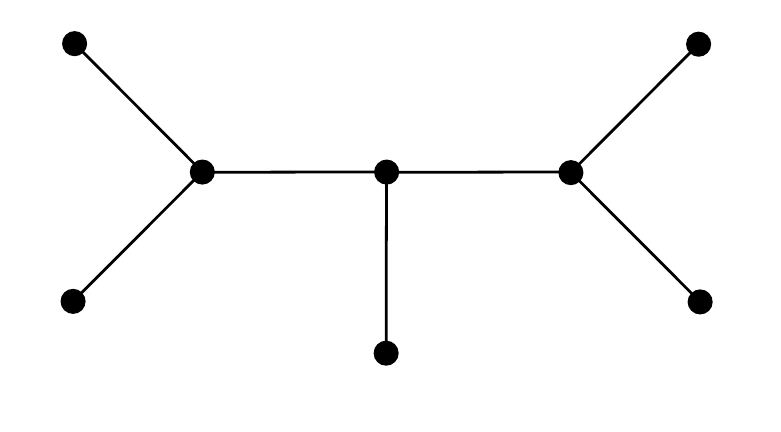}
 }
\subfigure[$\pig{\pc}$]{
   \scalebox{.7}{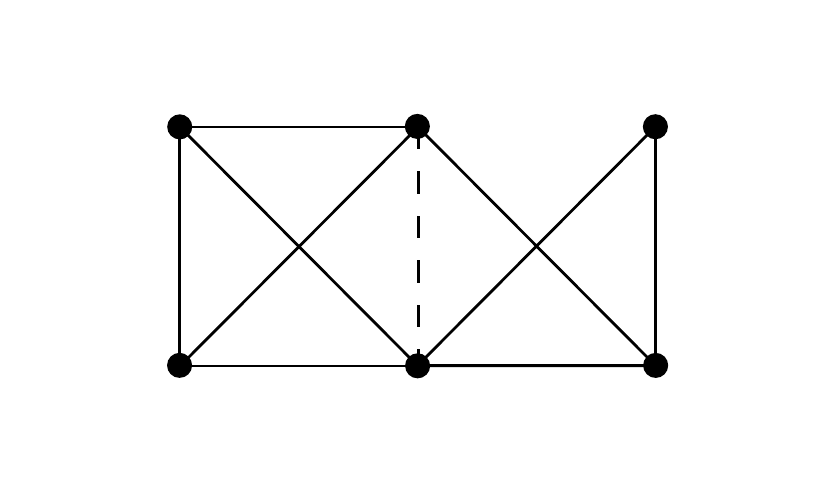}
 }
\caption{
An $X-$tree for $X = \{a,b,c,d,e\}$ and a non-chordal partition intersection graph $\pig{\pc}$.
The characters $\pc = \{\chi_1,\chi_2,\chi_3\}$ are $\chi_1 = ab|cd$, $\chi_2 = ac|bde$, and $\chi_3 = ab|de$.
Edges of $\pig{\pc}$ are given by solid edges, and the dashed edge represents the fill edge required to obtain the triangulation $\pig{\pc,\xt{}}$ of $\pig{\pc}$.
Both $\chi_1$ and $\chi_3$ are displayed by $\xt{}$, but $\chi_2$ is not because $\xt{}(ac)$ and $\xt{}(bde)$ intersect at $u$ and $v$.
This intersection induces the dashed fill edge of $\pig{\pc}$, which breaks $\chi_2$.
The edge $uv$ is distinguished by $\chi_1$.
}\label{fig_pp_pig}
\end{figure}

Two $X-$trees $\xt{}$ and $\xt{'}$ are \emph{isomorphic}, writing $\xt{} \isomorphic \xt{'}$, if there is a bijective map $\psi: V(T) \to V(T')$ that has the following properties:
\begin{enumerate}
	\item it preserves labels, meaning that $\phi' = \psi \circ \phi$; and
	\item it is a graph isomorphism, that is, $uv \in E(T)$ if and only if $\psi(u)\psi(v) \in E(T')$.
\end{enumerate}
A set of characters $\pc$ \emph{defines} a perfect phylogeny if it is the unique perfect phylogeny, up to isomorphism, for $\pc$.	
The \emph{unique perfect phylogeny problem} is to determine if a set $\pc$ of partial characters defines a perfect phylogeny.
If $\xtree{}$ displays $\chi$ and $uv$ is an edge of $T$ such that $u$ is a node of $\xt{}(A)$ and $v$ is a node of $\xt{}(A')$ where $A$ and $A'$ are distinct cells of $\chi$, then $uv$ is \emph{distinguished} by $\chi$.
If every edge of $\xt{}$ is distinguished by at least one character of $\pc$, then $\xt{}$ is \emph{distinguished} by $\pc$.
The following characterization is due to Semple and Steel.

\begin{theorem}\label{mainthm_semple_steel}\cite{SS02}
Let $\pc$ be a set of partial characters on $X$. 
Then $\pc$ defines a perfect phylogeny if and only if the following conditions are satisfied:
	\begin{description}
	\item[(a)] $\pig{\pc}$ has a unique proper minimal triangulation $H$; and
	\item[(b)] there is a free ternary perfect phylogeny for $\pc$ and it is distinguished by $\pc$.
	\end{description}
Further, if $\xt{}$ is the unique perfect phylogeny for $\pc$, then $\xt{}$ is a free ternary $X-$tree distinguished by $\pc$, and $\pig{\pc,\xt{}} = H$.
\end{theorem}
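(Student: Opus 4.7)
The plan is to exploit the map $\xt{} \mapsto \pig{\pc,\xt{}}$ that sends a perfect phylogeny to the proper triangulation of $\pig{\pc}$ obtained by adding an edge $(A,\chi)(A',\chi')$ whenever the subtrees $\xt{}(A)$ and $\xt{}(A')$ intersect in $T$. I would first establish two classes of ``canonicalization moves'' that relate perfect phylogenies of different shapes: (i) contracting an edge of $T$ that is not distinguished by any $\chi \in \pc$ still produces a perfect phylogeny; and (ii) at a labeled internal node, an unlabeled leaf, or a vertex of degree at least four, a local refinement by splitting neighbors across a new edge still produces a perfect phylogeny, provided the split does not separate any single cell into two disjoint subtrees. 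The key technical observation is that a perfect phylogeny is free, ternary, and distinguished by $\pc$ if and only if neither kind of move applies non-trivially.

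For the forward direction, assume $\xt{}$ is the unique perfect phylogeny for $\pc$. If some edge of $T$ were not distinguished, contracting it would yield a non-isomorphic perfect phylogeny, contradicting uniqueness; analogous reasoning at high-degree nodes, labeled non-leaves, or unlabeled leaves yields further non-isomorphic perfect phylogenies via move (ii). Hence (b) holds. For (a), set $H := \pig{\pc,\xt{}}$; the display property guarantees $H$ is proper, and the distinguished/ternary/free structure of $\xt{}$ forces every fill edge to be irremovable, so $H$ is minimal. Any other proper minimal triangulation $H'$ would, through Theorem~\ref{thm_buneman} combined with the canonicalization moves, produce a free ternary distinguished perfect phylogeny $\xt{'}$ with $\pig{\pc,\xt{'}} = H' \ne H$, and in particular $\xt{'} \not\isomorphic \xt{}$, contradicting uniqueness.

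For the reverse direction, let $\xt{0}$ be a free ternary distinguished perfect phylogeny furnished by (b) and let $\xt{}$ be any perfect phylogeny for $\pc$. Applying canonicalization moves (i) and (ii) iteratively to $\xt{}$ produces a free ternary distinguished phylogeny $\xt{}^{\ast}$; each move only coarsens the intersection pattern of subtrees, so $\pig{\pc,\xt{}^{\ast}}$ is a proper minimal triangulation, which by (a) equals the unique $H = \pig{\pc,\xt{0}}$. Then I would reconstruct a free ternary distinguished $X$-tree from $H$ directly — taking a clique tree of $H$, contracting edges between cliques that correspond to characters covering both endpoints, and transferring labels from $X$ — showing this reconstruction is unique up to isomorphism. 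Consequently $\xt{}^{\ast} \isomorphic \xt{0}$ and hence $\xt{} \isomorphic \xt{0}$, establishing uniqueness; the final clause $\pig{\pc,\xt{}} = H$ falls out of the construction.

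The principal obstacle is the canonicalization package itself: verifying that contraction of an undistinguished edge never causes two cells of the same character to share a vertex in the resulting tree, that every non-canonical perfect phylogeny admits some legal move yielding a non-isomorphic phylogeny, and that the resulting ``canonical hull'' is unique up to isomorphism. Bundled with this is the injectivity of $\xt{} \mapsto \pig{\pc,\xt{}}$ on isomorphism classes of free ternary distinguished phylogenies, which is what makes the bijection with proper minimal triangulations work and is exactly the pivot on which both directions of the theorem turn.
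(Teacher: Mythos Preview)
This theorem is not proved in the paper: it is quoted verbatim from Semple and Steel \cite{SS02} and used as a black box (see the citation attached to the theorem statement). Consequently there is no ``paper's own proof'' to compare your proposal against; the present paper builds \emph{on top of} Theorem~\ref{mainthm_semple_steel} rather than re-deriving it.

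That said, your outline is broadly in the spirit of the original Semple--Steel argument: the forward direction does hinge on showing that an undistinguished edge, a labeled internal node, or a high-degree vertex permits a non-trivial modification of $\xt{}$ that still displays $\pc$, and the reverse direction does proceed by canonicalizing an arbitrary perfect phylogeny and then arguing injectivity of $\xt{} \mapsto \pig{\pc,\xt{}}$ on the canonical class. Two places where your sketch is thin compared to what a full proof requires: first, your claim that ``each move only coarsens the intersection pattern of subtrees'' is not quite right for refinement moves of type~(ii), which can \emph{remove} intersections and hence edges of $\pig{\pc,\xt{}}$---you need this to reach a \emph{minimal} triangulation, but it means the sequence of moves is not monotone in the way you suggest, and you must argue termination and that the limit is indeed minimal. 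Second, the injectivity step---recovering a free ternary distinguished $X$-tree uniquely from $H$---is the real crux, and ``taking a clique tree of $H$'' presupposes that $H$ has a unique clique tree with the right shape, which is essentially what the present paper isolates as the ur-chordal condition in Theorem~\ref{mainthm_urchordal_defining}. In Semple and Steel's original proof this is handled via the theory of $X$-splits and restricted chordal completions rather than clique trees directly, so your reconstruction step would need an independent argument.
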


This result is the impetus of our current work, and one of our main interests is to re-formulate condition \textbf{(b)} in terms of combinatorial structures that play a significant role in the study of chordal graphs and minimal triangulations.

Chordal graphs are characterized by the existence of trees that represent the adjacency structure of the graph.
Suppose $G$ is a graph with vertices $V(G) = \{x_1, x_2, \ldots, x_n\}$.
A \emph{tree representation} of $G$ consists of a tree $T$ and subtrees $T_1, T_2, \ldots, T_n$ of $T$ such that two trees $T_i$ and $T_j$ intersect if and only if $x_i$ and $x_j$ are adjacent.
Here, the subtrees are in one-to-one correspondence with the vertex set of $G$, and this correspondence is made explicit by mapping each subtree $T_i$ to the vertex $x_i$ of $G$.
Observe that a node $v$ of $T$ defines a clique $\cliquemap(v) = \{x_i \mid v \mbox{ is a node of } T_i\}$ of $G$.
Notationally, we will write a tree representation as an ordered pair $\treerep{}$ where $\cliquemap{}$ maps nodes of $T$ to cliques of $G$ satisfying the following properties:
\begin{description}
	\item[(Edge Coverage)] a pair of vertices $x$ and $y$ of $G$ are adjacent if and only if there is a node $v$ of $T$ such that $x,y \in \cliquemap{}(v)$; and
	\item[(Convexity)] for each vertex $x$ of $G$, the set of nodes $\{v \in V(T) \mid x \in \cliquemap{}(v)\}$ induces a subtree of $T$ (i.e.\ a connected subgraph of $T$).
\end{description}

We will frequently refer to the convexity property throughout the paper.
As with $X-$trees, we will call $T$ the \emph{underlying tree} of $\tr{}$.
Often we will define a tree representation by only specifying the underlying tree $T$ and a collection of subtrees of $T$, which together implicitly define $\cliquemap{}$.

Let $T_1, T_2, \ldots, T_k$ be a collection of subtrees of $T$.
If each pair $T_i, T_j$ of subtrees intersect at a node $v_{ij}$, then by the \emph{Helly property} for subtrees of a tree \cite{G_Book04}, all of $T_1, T_2, \ldots, T_k$ intersect at a common node $v$.
This property manifests itself as a statement about cliques of $G$ and nodes of $\tr{}$ in the following way: for any clique $K$ of $G$, there is at least one node $u$ of $T$ such that $K \subseteq \cliquemap{}(u)$.
In particular, this is true when $K$ is a maximal clique of $G$ (i.e.\ no proper superset is also a clique), and therefore $\cliquemap{}(V(T))$ contains the set of maximal cliques of $G$.
If the maximal cliques of $G$ are in one-to-one correspondence with the nodes of $T$ via $\cliquemap{}$, then $\tr{}$ is a \emph{clique tree} of $G$.
See Figure \ref{fig_cg_ct} for an example.

\begin{theorem}\cite{B74,G74,W78}\label{thm_chordal_treereps}
	The following statements are equivalent.
\begin{description}
	\item[(a)] $G$ is a chordal graph.
	\item[(b)] $G$ has a tree representation.
	\item[(c)] $G$ has a clique tree.
\end{description}
\end{theorem}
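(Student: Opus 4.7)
The plan is to prove the three equivalences by establishing the cycle $(c) \Rightarrow (b) \Rightarrow (a) \Rightarrow (c)$. The implication $(c) \Rightarrow (b)$ is immediate from the definitions: a clique tree already satisfies Edge Coverage and Convexity, and the extra requirement of being a clique tree --- that $\cliquemap$ be a bijection onto the maximal cliques of $G$ --- is simply dropped. For $(b) \Rightarrow (a)$ I would argue by contrapositive. Suppose $\treerep{}$ is a tree representation of $G$ and $G$ contains an induced cycle $x_1 x_2 \cdots x_k x_1$ with $k \geq 4$. By Convexity, each $T_i := \{v \in V(T) \mid x_i \in \cliquemap(v)\}$ is a subtree of $T$, and by Edge Coverage $T_i \cap T_{i+1} \neq \emptyset$ cyclically while $T_i \cap T_j = \emptyset$ whenever $x_i x_j \notin E(G)$. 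Pick $v_i \in T_i \cap T_{i+1}$ and let $P_i$ be the unique path in $T$ from $v_{i-1}$ to $v_i$; by Convexity $P_i \subseteq T_i$. Concatenating the $P_i$ yields a closed walk in $T$, which, since $T$ is a tree, must reduce to a trivial walk by successive cancellations of adjacent opposite edges. Each cancellation moves some $v_i$ to a neighbour still inside $T_i \cap T_{i+1}$ (again by Convexity), and when no further cancellations are possible all $v_i$ coincide --- so this common node lies in $T_1 \cap T_3$, contradicting the non-adjacency of $x_1$ and $x_3$.

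For $(a) \Rightarrow (c)$, the plan is induction on $|V(G)|$. The base case is trivial. For the inductive step I invoke the classical theorem of Dirac that every chordal graph has a simplicial vertex $x$ --- one whose neighbourhood is a clique. Since chordality is hereditary, $G - x$ has a clique tree $\treerep{'}$ by the inductive hypothesis. The set $N_G(x)$ is a clique in $G - x$, so by the Helly property applied to $\treerep{'}$ there is a node $u \in V(T')$ with $N_G(x) \subseteq \cliquemap'(u)$. If equality $\cliquemap'(u) = N_G(x)$ holds for some such $u$, I keep $T = T'$ and redefine $\cliquemap(u) := \cliquemap'(u) \cup \{x\}$; otherwise $N_G[x]$ is a new maximal clique of $G$ not already present in $G - x$, and I attach a new leaf $u^*$ to some such $u$ and set $\cliquemap(u^*) := N_G[x]$. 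In both cases Edge Coverage and Convexity are preserved --- the set of nodes of $T$ containing $x$ is a single node, and each $y \in N_G(x)$ has its old subtree either unchanged or extended by a single adjacent new leaf --- and the resulting $\cliquemap$ turns out to be in bijection with the maximal cliques of $G$.

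The main obstacle is the inductive step of $(a) \Rightarrow (c)$, specifically tracking how the set of maximal cliques of $G - x$ transforms into that of $G$ when $x$ is re-added: a maximal clique $K$ of $G - x$ survives in $G$ precisely when $K \not\subseteq N_G(x)$, and the unique new maximal clique containing $x$ is $N_G[x]$. Once Dirac's theorem provides a simplicial vertex and the Helly property localizes $N_G(x)$ within a single maximal clique of the inductive clique tree, the construction collapses to the short case analysis above, and verifying the bijection with the maximal cliques of $G$ is routine.
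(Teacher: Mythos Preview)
The paper does not supply a proof of this theorem; it is quoted as a classical result with citations to Buneman, Gavril, and Walter and then used as a black box. So there is no ``paper's own proof'' to compare your proposal against.

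That said, your proposal is a sound reconstruction of one of the standard proofs. The step $(c)\Rightarrow(b)$ is immediate as you say, and your inductive construction for $(a)\Rightarrow(c)$ via a simplicial vertex is precisely Gavril's original argument: the case split on whether $N_G(x)$ is already a maximal clique of $G-x$ is the right one, and your account of how the family of maximal cliques changes when $x$ is reinserted (every maximal clique $K$ of $G-x$ survives unless $K=N_G(x)$, and the unique new maximal clique is $N_G[x]$) is accurate.

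One small wrinkle in your $(b)\Rightarrow(a)$ argument: the assertion ``each cancellation moves some $v_i$ to a neighbour still inside $T_i\cap T_{i+1}$'' is justified only while every path $P_i$ has positive length, since a cancellation straddles exactly one junction $v_i$ between two nontrivial simple paths. If during the process some $P_i$ shrinks to length zero, so that $v_{i-1}=v_i$, the next cancellation sits between $P_{i-1}$ and $P_{i+1}$ and the target neighbour lies in $T_{i-1}\cap T_{i+1}$ but not obviously in $T_i$. This is harmless, because the moment $v_{i-1}=v_i$ you already have a point of $T_{i-1}\cap T_{i+1}$ and hence the chord $x_{i-1}x_{i+1}$, so you may stop early. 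Adding that one sentence closes the gap; the overall plan is correct.
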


Observe that if $\treerep{}$ is a clique tree of $G$ and $uv$ is an edge of $T$, then because $\cliquemap{}(u)$ and $\cliquemap{}(v)$ are both maximal cliques of $G$, there is a vertex $x$ of $G$ in $\cliquemap{}(u) - \cliquemap{}(v)$.
In general, a chordal graph has an exponential number of clique trees.
An algorithm to enumerate clique trees, along with a formula to count them, appears in \cite{HL89}.

We will often be analyzing a tree representation $\treerep{}$ of a triangulation $H$ of $\pig{\pc}$.
Given a vertex $(A,\chi)$ of $\pig{\pc}$, we will denote the subtree of $T$ that it corresponds to by $\tr{}(A,\chi)$.
Observe that $v$ is a node of $\tr{}(A,\chi)$ if and only if $(A,\chi) \in \cliquemap{}(v)$.
Given a set of characters $\pc$ on $X$ and an $X-$tree $\xtree{}$, a chordal graph $\pig{\pc,\xt{}}$ is given by adding an edge between two vertices $(A_1,\chi_1)$ and $(A_2,\chi_2)$ if and only if $\xt{}(A_1)$ and $\xt{}(A_2)$ intersect.
This construction, along with the fact that $\pig{\pc,\xt{}}$ is a triangulation of $\pig{\pc}$, is well-known in the phylogenetics literature, and will be discussed in detail in Section 2.

A chordal graph $G$ is \emph{uniquely representable} if it has a single clique tree, or \emph{ur-chordal} for short.
A ur-chordal graph is \emph{ternary} if each internal node of its clique tree has degree three, and its \emph{leafage}\footnote{In general, the leafage of a chordal graph is the minimum number of leaves that a clique tree of the graph can have \cite{LMW98}.} is the number of leaves its clique tree has.
Let $H$ be a proper triangulation of $\pig{\pc}$ and $\treerep{}$ be a clique tree of $H$.
An edge $uv$ of $\tr{}$ is \emph{incontractable with respect to $\chi$} if there are distinct cells $A$ and $A'$ of $\chi$ such that $u \in \tr{}(A,\chi)$ and $v \in \tr{}(A',\chi)$.
We say that $\tr{}$ is \emph{incontractable} with respect to $\pc$ if each edge is incontractable with respect to at least one $\chi \in \pc$.
Now we present our first main result.

\begin{theorem}\label{mainthm_urchordal_defining}
Suppose $\pc$ is a set of partial characters on $X$.
Then $\pc$ defines a perfect phylogeny if and only if the following conditions hold:
	\begin{description}
	\item[(a)] $\pig{\pc}$ has a unique proper minimal triangulation $H$;
	\item[(b)] $H$ is a ternary ur-chordal graph with leafage $|X|$; and
	\item[(c)] each edge of $H$'s unique clique tree is incontractable with respect to $\pc$.
	\end{description}
	Further, if $\xt{}$ is the perfect phylogeny defined by $\pc$, then $\xt{}$ is a free ternary $X-$tree distinguished by $\pc$, and $\pig{\pc,\xt{}} = H$.
\end{theorem}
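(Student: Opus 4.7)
The plan is to reduce to Theorem~\ref{mainthm_semple_steel} by showing that, assuming (a), conditions (b) and (c) jointly capture exactly the statement that there exists a free ternary perfect phylogeny for $\pc$ that is distinguished by $\pc$. The bridge is a natural correspondence, which Section~2 will develop: if $\xtree{} = (T,\phi)$ is any free ternary $X$-tree displaying $\pc$, then setting $\cliquemap{}(v) = \{(A,\chi) : v \in \xt{}(A)\}$ turns $T$ into a clique tree of $\pig{\pc,\xt{}}$. Under this correspondence, leaves of $T$ are in bijection with $X$ (freeness), internal nodes of $T$ have degree three (ternariness), and an edge $uv$ of $T$ is distinguished by $\chi$ if and only if it is incontractable with respect to $\chi$, since in both cases the requirement is that $u$ and $v$ lie in subtrees associated with distinct cells of $\chi$.

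For the forward direction, I would suppose $\pc$ defines the perfect phylogeny $\xt{} = (T,\phi)$. Theorem~\ref{mainthm_semple_steel} immediately gives (a), freeness and ternariness of $\xt{}$, the identity $H = \pig{\pc,\xt{}}$, and the distinguishability of each edge of $T$. Via the correspondence, $(T,\cliquemap{})$ is then a clique tree of $H$ with $|X|$ leaves and every internal node of degree three, so the leafage is $|X|$; and each edge of $T$ is incontractable with respect to the character that distinguishes it, yielding (c). The one remaining task is ur-chordality: a second clique tree of $H$ would yield, via the reverse construction, a second non-isomorphic free ternary perfect phylogeny for $\pc$, contradicting uniqueness of $\xt{}$.

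For the backward direction, I would take the unique proper minimal triangulation $H$ from (a) and its unique clique tree $\tr{} = (T,\cliquemap{})$ from (b), and then construct $\xt{} = (T,\phi)$ on the same underlying tree $T$. The central claim is that each leaf $\ell$ of $T$ satisfies $\cliquemap{}(\ell) = \{(A,\chi) : x \in A\}$ for a unique $x \in X$; since the leafage equals $|X|$, this gives a bijection $\phi : X \to \mathrm{leaves}(T)$. Properness of $H$ then forces $\tr{}(A,\chi) \cap \tr{}(A',\chi) = \emptyset$ for distinct cells $A,A'$ of $\chi$, and the identification $\xt{}(A) = \tr{}(A,\chi)$ shows that $\xt{}$ displays $\chi$. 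Incontractability (c) finally translates into distinguishability of each edge of $\xt{}$, so Theorem~\ref{mainthm_semple_steel} concludes that $\pc$ defines $\xt{}$, and its ``Further'' clause yields ours.

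The main obstacle is the leaf-identification claim itself: that each leaf's clique in $T$ has the form $\{(A,\chi) : x \in A\}$ for a unique $x \in X$. This is precisely where conditions (b) and (c) must cooperate --- without incontractability, leaves could carry cliques sharing no common element of $X$, and without leafage $|X|$ the bijection fails on cardinality grounds. I expect the argument to leverage the standard clique-tree fact that $\cliquemap{}(\ell) \setminus \cliquemap{}(v)$ is nonempty for $v$ the unique neighbor of a leaf $\ell$, and to use incontractability of the two other edges at $v$ (provided by ternariness) to pin $\cliquemap{}(\ell)$ down to a single coordinate of $X$.
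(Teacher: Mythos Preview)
Your overall plan---reduce to Theorem~\ref{mainthm_semple_steel} by showing that (b) and (c) together are equivalent to the existence of a free ternary distinguished perfect phylogeny---is the same as the paper's, though the paper routes the argument through the more general Theorem~\ref{mainthm_subset_urchordal_defining} (taking $\pc' = \pc$) and its supporting Propositions~\ref{mainprop_2} and~\ref{mainprop_1}. Unwound, those propositions carry out essentially what you sketch: derive a clique tree from the $X$-tree in the forward direction, and induce an $X$-tree from the clique tree in the backward direction, invoking Theorem~\ref{thm_mintri_induced_xt} to identify underlying trees and subtrees.

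There is one genuine gap. In your forward direction you claim that a second clique tree of $H$ would yield a \emph{non-isomorphic} perfect phylogeny. That is not immediate: the induced $X$-tree from a second clique tree is certainly a perfect phylogeny for $\pc$, hence isomorphic to $\xt{}$, but you then need to argue that the isomorphism forces the two clique trees to coincide (this can be done by tracking that $\cliquemap{'}(\psi(v)) = \cliquemap{}(v)$, so the isomorphism is the identity on maximal cliques). The paper sidesteps this entirely: ur-chordality is established directly via Lemma~\ref{lem_define_implies_urchordal}, which shows there are no containment relations among minimal separators of $\pig{\pc,\xt{}}$ and then appeals to Theorem~\ref{thm_urchordal_characterizations}. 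That route is cleaner and does not need the reverse construction at all.

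For the backward direction, your ``leaf identification'' plan (using incontractability of the two other edges at the neighbor of a leaf) is more involved than necessary. The paper simply appeals to Theorem~\ref{thm_mintri_induced_xt}: the $X$-tree induced by the unique clique tree has the \emph{same} underlying tree, which by (b) is ternary with $|X|$ leaves, so freeness and ternariness follow by pigeonhole (each leaf is the unique candidate node for at least one element of $X$, and there are exactly $|X|$ of each). Condition~(c) then translates to distinguishability exactly as you say.
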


Let $\pc$ be a set of partial characters on $X$ and $\chi \in \pc$.
Suppose $H$ is a triangulation of $\pig{\pc}$ with fill edge $(A,\chi)(A',\chi)$ where $A$ and $A'$ are distinct cells of $\chi$.
Then we say that $(A,\chi)(A',\chi)$ \emph{breaks} $\chi$, and $\chi$ is a \emph{broken character} of $H$.
For a triangulation $H$ of $\pig{\pc}$, its \emph{displayed characters} are the characters of $\pc$ that are not broken characters of $H$.
Bordewich, Huber, and Semple \cite{BHS05} proved that it is possible to find a maximum-sized compatible subset of $\pc$ using the partition intersection graph.

\begin{theorem}\label{thm_cr}
	Let $\pc$ be a set of partial characters on $X$. Then $\pc'$ is a maximum-sized compatible subset of $\pc$ if and only if there is a triangulation $H$ of $\pig{\pc}$ that has $\pc'$ as its displayed characters, and any other triangulation of $\pig{\pc}$ has at most $|\pc'|$ displayed characters.
\end{theorem}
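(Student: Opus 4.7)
The plan is to reduce the theorem to a single lemma establishing a correspondence between compatible subsets of $\pc$ and triangulations of $\pig{\pc}$ indexed by their displayed characters. Specifically, I would first prove the following equivalence: a subset $\pc'' \subseteq \pc$ is compatible if and only if there exists a triangulation $H$ of $\pig{\pc}$ whose displayed characters contain $\pc''$.

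For the forward direction of the lemma, if $\pc''$ is compatible, let $\xt{}$ be a perfect phylogeny for $\pc''$ and form $\pig{\pc,\xt{}}$ as in the construction already referenced in the text. By the definition of that construction, the fill edges are exactly the pairs $(A,\chi)(A',\chi')$ for which $\xt{}(A)$ and $\xt{}(A')$ intersect; since $\xt{}$ displays every $\chi \in \pc''$, no fill edge can lie between two cells of the same $\chi \in \pc''$, so every $\chi \in \pc''$ appears among the displayed characters of $\pig{\pc,\xt{}}$. For the reverse direction, suppose $H$ is a triangulation of $\pig{\pc}$ with displayed characters $\pc''$. Then $\pig{\pc''}$ is the induced subgraph of $\pig{\pc}$ on $\{(A,\chi) : \chi \in \pc''\}$, and the subgraph $H'$ of $H$ induced on the same vertex set is chordal (induced subgraphs of chordal graphs are chordal), hence a triangulation of $\pig{\pc''}$. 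Every fill edge of $H'$ already appeared in $H$, so by the displayed condition no such edge joins two vertices sharing a character; thus $H'$ is a \emph{proper} triangulation of $\pig{\pc''}$, and Theorem~\ref{thm_buneman} (together with the remark that a proper triangulation implies the existence of a proper minimal one) yields that $\pc''$ is compatible.

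With this lemma in hand, the main theorem follows by a brief maximality argument. For the forward direction, suppose $\pc'$ is a maximum-sized compatible subset. The forward half of the lemma produces a triangulation $H$ whose displayed characters form a superset $\pc^* \supseteq \pc'$; the reverse half says $\pc^*$ is itself compatible, so by maximality of $|\pc'|$ we must have $\pc^* = \pc'$. For any other triangulation, its displayed characters again form a compatible set and therefore have cardinality at most $|\pc'|$. Conversely, if some triangulation $H$ has $\pc'$ as its displayed characters and no triangulation exceeds $|\pc'|$ displayed characters, then $\pc'$ is compatible by the reverse half, and any compatible $\pc''$ with $|\pc''| > |\pc'|$ would, via the forward half, yield a triangulation displaying at least $|\pc''|$ characters, a contradiction.

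The main obstacle I anticipate is the reverse direction of the lemma: verifying that the induced subgraph $H'$ is a \emph{proper} triangulation of the precise graph $\pig{\pc''}$, which rests on the two complementary observations that $\pig{\pc''}$ equals the induced subgraph of $\pig{\pc}$ on its vertex set, and that the broken-character condition exactly outlaws the forbidden fill edges. Once this is in place, the remainder is bookkeeping against Theorem~\ref{thm_buneman} and the $\pig{\pc,\xt{}}$ construction.
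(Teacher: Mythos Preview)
The paper does not prove Theorem~\ref{thm_cr}; it is quoted from \cite{BHS05} and used as background. Your argument is correct and, at the level of strategy, is the same one implicit in the lemmas the paper later imports from \cite{BHS05}: Lemma~\ref{lem_xtree_displays_legalcharacters} is exactly the forward half of your key lemma (compatible $\Rightarrow$ some triangulation displays at least $\pc''$, via $\pig{\pc,\xt{}}$), and Lemma~\ref{lem_triangulation_cliquetree_displays_legalcharacters} supplies the reverse half. The one genuine difference is in how the reverse half is obtained. You pass to the induced subgraph $H'$ of $H$ on the $\pc''$-vertices, observe it is a proper triangulation of $\pig{\pc''}$, and invoke Theorem~\ref{thm_buneman}. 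The paper's route (via \cite{BHS05}) instead takes a clique tree of $H$, builds an induced $X$-tree, and checks directly that it displays the unbroken characters; this avoids the restriction step and yields the witnessing phylogeny explicitly, whereas your argument is a bit shorter and keeps everything at the level of triangulations. Both are valid, and your identified ``obstacle'' is not one: chordality is hereditary and the broken-character condition is precisely the negation of properness on the restricted vertex set, so $H'$ is indeed a proper triangulation of $\pig{\pc''}$.
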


A subset of partial characters $\pc' \subseteq \pc$ is a \emph{maximal defining subset} of $\pc$ when $\pc'$ defines a perfect phylogeny, and there is no compatible set $\pc''$ such that $\pc' \subset \pc'' \subset \pc$.
Our second main result is the following.

\begin{theorem}\label{mainthm_subset_urchordal_defining}
	Suppose $\pc$ is a set of partial characters on $X$ and $\pc' \subseteq \pc$. Then $\pc'$ is a maximal defining subset of $\pc$ if and only if the following conditions hold:
	\begin{description}
	\item[(a)] $\pig{\pc}$ has a unique minimal triangulation $H$ that has $\pc'$ as its displayed characters, and no other minimal triangulation of $\pig{\pc}$ has at least $\pc'$ as its displayed character set;
	\item[(b)] $H$ is a ternary ur-chordal graph with leafage $|X|$; and
	\item[(c)] each edge of $H$'s unique clique tree is incontractable with respect to $\pc'$.
	\end{description}
	Further, if $\xt{}$ is the perfect phylogeny defined by $\pc'$, then $\xt{}$ is a free ternary $X-$tree distinguished by $\pc'$, and $\pig{\pc,\xt{}} = H$.
\end{theorem}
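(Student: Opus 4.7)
The plan is to reduce Theorem~\ref{mainthm_subset_urchordal_defining} to Theorem~\ref{mainthm_urchordal_defining} applied to $\pc'$, using the construction $\pig{\pc,\xt{}}$ as the bridge. The key correspondence is that for any $X$-tree $\xt{}$ with underlying tree $T$, the displayed characters of $\pig{\pc,\xt{}}$ are precisely the characters of $\pc$ displayed by $\xt{}$, and $\pig{\pc,\xt{}}$ and $\pig{\pc',\xt{}}$ admit clique tree representations that share $T$ as their underlying tree: a node $v$ of $T$ labels the $\pig{\pc,\xt{}}$-maximal clique $\{(A,\chi)\mid v\in\xt{}(A)\}$, and the analogous $\pig{\pc',\xt{}}$-maximal clique is obtained by restricting to $\chi\in\pc'$. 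Consequently the ternary, ur-chordal, leafage $|X|$, and edge-incontractability properties transfer between the two graphs, provided we track the character set with respect to which incontractability is measured.

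For the forward direction, assume $\pc'$ is a maximal defining subset. Theorem~\ref{mainthm_urchordal_defining} applied to $\pc'$ supplies the free ternary $X$-tree $\xt{}$, distinguished by $\pc'$, such that $H':=\pig{\pc',\xt{}}$ is the unique proper minimal triangulation of $\pig{\pc'}$ and satisfies conditions~(b) and~(c) relative to $\pc'$. Set $H:=\pig{\pc,\xt{}}$. Maximality of $\pc'$ forces $\xt{}$ to display no $\chi\in\pc\setminus\pc'$, for otherwise $\pc'\cup\{\chi\}\subseteq\pc$ would be compatible via $\xt{}$; hence the displayed characters of $H$ are exactly $\pc'$, and $H$ inherits conditions~(b) and~(c) from $H'$ via the correspondence. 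For the uniqueness assertion in~(a), any minimal triangulation $H^*$ of $\pig{\pc}$ whose displayed characters contain $\pc'$ restricts to a proper triangulation of $\pig{\pc'}$ that contains a proper minimal triangulation equal to $H'$; tracing fill edges back through the shared tree $T$ then forces $H^*=\pig{\pc,\xt{}}=H$.

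For the reverse direction, assume conditions~(a)--(c). The induced subgraph $H':=H[V(\pig{\pc'})]$ is a proper triangulation of $\pig{\pc'}$ because $\pc'$ is exactly the set of displayed characters of $H$; it inherits the ternary ur-chordal structure with leafage $|X|$ and the incontractability property with respect to $\pc'$ from $H$. The uniqueness clause of~(a) forces $H'$ to be the unique proper minimal triangulation of $\pig{\pc'}$, and Theorem~\ref{mainthm_urchordal_defining} applied to $\pc'$ then yields that $\pc'$ defines a perfect phylogeny $\xt{}$ with $\pig{\pc',\xt{}}=H'$ and $\pig{\pc,\xt{}}=H$. For maximality, suppose $\pc''$ with $\pc'\subsetneq\pc''\subseteq\pc$ were compatible, and let $\xt{''}$ be a perfect phylogeny for $\pc''$. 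Then $\pig{\pc,\xt{''}}$ is a triangulation of $\pig{\pc}$ whose displayed characters contain $\pc''\supsetneq\pc'$; any minimal sub-triangulation of it is a minimal triangulation of $\pig{\pc}$ distinct from $H$ whose displayed characters still contain $\pc'$, contradicting the second clause of~(a).

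The main obstacle is making the restriction-and-minimality step rigorous: showing that $H[V(\pig{\pc'})]$ really is the unique proper minimal triangulation of $\pig{\pc'}$, and not merely a triangulation properly containing one, despite the presence in $H$ of fill edges incident to vertices with characters in $\pc\setminus\pc'$. I expect this to hinge on the shared clique tree $T$ together with condition~(c): incontractability with respect to $\pc'$ ensures that every fill edge of $H$ with both endpoints carrying characters in $\pc'$ is already forced by some minimal separator of $H'$, so no extraneous fill edges survive the restriction.
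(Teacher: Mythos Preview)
Your plan has the right overall shape, but it is organized backwards relative to the paper and contains gaps you only partially acknowledge. In the paper, Theorem~\ref{mainthm_urchordal_defining} is obtained \emph{from} Theorem~\ref{mainthm_subset_urchordal_defining} by setting $\pc'=\pc$, so reducing the latter to the former is circular in this context. The paper instead reduces both directions to the Semple--Steel characterization (Theorem~\ref{mainthm_semple_steel}), working directly with $H$ and the $X$-tree induced by its clique tree, never needing to transfer the ur-chordal/ternary/leafage properties from $H$ to the restricted graph $H'=H[V(\pig{\pc'})]$.

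There are concrete gaps. In your forward direction you set $H:=\pig{\pc,\xt{}}$ but never show it is a \emph{minimal} triangulation of $\pig{\pc}$; the paper gets this for free by first producing a minimal triangulation via Lemma~\ref{lem_legalcharacters_mintri} and then identifying it with $\pig{\pc,\xt{}}$ through Theorem~\ref{thm_mintri_induced_xt}. Your uniqueness step, ``tracing fill edges back through the shared tree $T$,'' is the heart of the matter but is not an argument: a second minimal triangulation $H^*$ has no a~priori relationship to $T$. The paper's mechanism (Lemma~\ref{lem_maxdef_uniqueness}) is that any clique tree of $H^*$ induces a perfect phylogeny for $\pc'$, which must be isomorphic to $\xt{}$ since $\pc'$ defines it, and then Theorem~\ref{thm_mintri_induced_xt} forces $H^*=\pig{\pc,\xt{^*}}=\pig{\pc,\xt{}}$.

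For the reverse direction, the obstacle you flag is real and your proposed fix via incontractability only does part of the work: it shows the restricted node-labels $\cliquemap(v)\cap V(\pig{\pc'})$ are pairwise incomparable along edges of $T$, hence that $T$ underlies a clique tree of $H'$, but it does not by itself give minimality of $H'$ among triangulations of $\pig{\pc'}$, nor ur-chordality. The paper avoids this entirely: its key technical device is Lemma~\ref{lem_subtri_to_supertri}, which lifts any minimal triangulation of $\pig{\pc'}$ to one of $\pig{\pc}$ with the same restriction, so that condition~(a) for $\pig{\pc}$ immediately yields the uniqueness of the proper minimal triangulation of $\pig{\pc'}$. Condition~(b) of Semple--Steel is then read off directly from the $X$-tree induced by $H$'s clique tree via Theorem~\ref{thm_mintri_induced_xt}, with no need to analyze $H'$ at all.
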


\section{Chordal Graph Preliminaries}

\begin{figure}[t]
\centering
\subfigure[$G$]{
   \scalebox{.78}{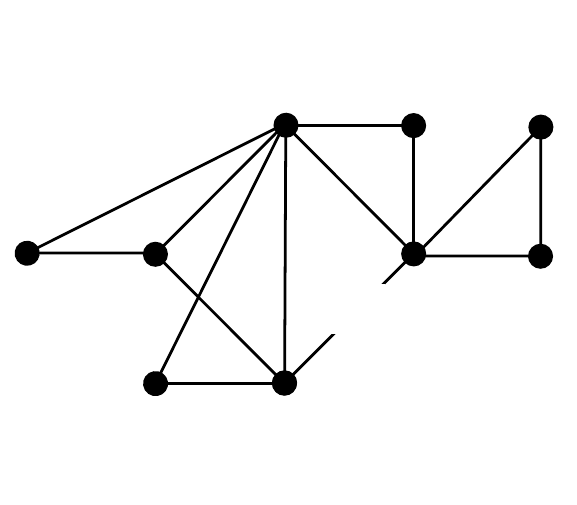}
 }
\subfigure[$\tr{}$]{
   \scalebox{.78}{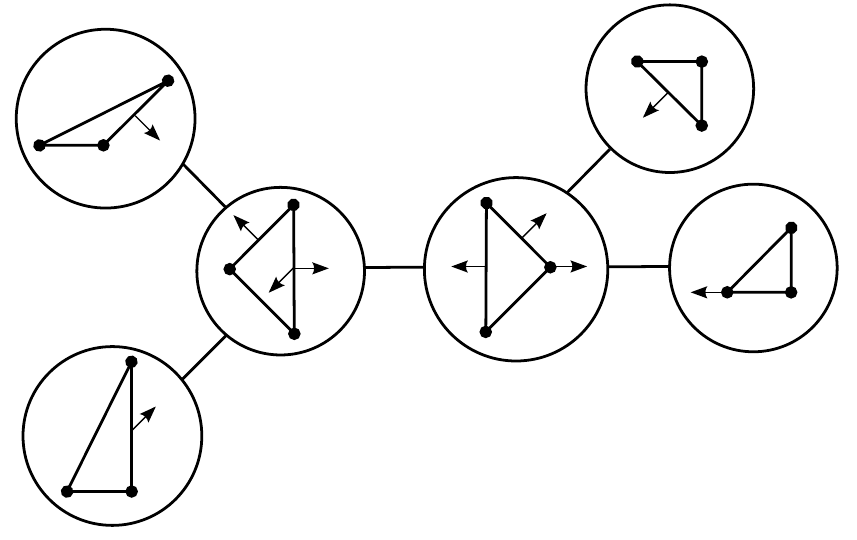}
 }
\caption{
A chordal graph $G$ and clique tree $\treerep{}$ of $G$.
There are three clique trees for $G$, one of which is obtained by removing the bottom-leftmost node and its incident edge, and then attaching it to $v$.
The maximal clique map $\cliquemap{}$ is defined by the triangle drawn inside each node.
Arrows indicate vertices that consist of the intersection of a neighboring node's maximal clique, and each such intersection is a minimal separator by Theorem \ref{thm_minseps_ctedges}.
}\label{fig_cg_ct}
\end{figure}

In this section, we detail known results on chordal graphs that are necessary for the remainder of the paper. 
Suppose $G = (V,E)$ is a graph and $S \subseteq V$. 
Let $G - S$ denote the graph obtained from $G$ by removing $S$ and all edges incident to at least one vertex in $S$. 
If there are vertices $x$ and $y$ of $G$ that are connected in $G$ but not in $G - S$, then $S$ is an \emph{$xy-$separator}, and if no proper subset of $S$ has this property, it is a \emph{minimal $xy-$separator}.
When $S$ is a minimal $xy-$separator for at least one pair of vertices $x$ and $y$, it is a \emph{minimal separator}.
Minimality in this definition is relative; it is possible to have containment relationships between two minimal separators\footnote{Dirac \cite{D61} called these sets \emph{relatively minimal cut-sets}, which is perhaps more descriptive, but this term has stuck of the modern literature on chordal graphs and minimal triangulations.}.
The maximal connected subsets of $G - S$ are the \emph{connected components} of $G - S$.
Let $C$ be a connected component of $G - S$.
The \emph{neighborhoood} of $C$ in $G$, denoted $N(C)$, is the set of vertices of $S$ that are adjacent to at least one vertex in $C$.
If $N(C) = S$, then it is a \emph{full component} of $G - S$.
The following useful characterization of minimal separators is well-known, and left as an exercise in \cite{G_Book04}.

\begin{lemma}\label{lem_minsep_iff_two_full_components}
	Let $G = (V,E)$ be a graph, and $S \subseteq V$. Then $S$ is a minimal separator if and only if $G - S$ has two or more full components.
\end{lemma}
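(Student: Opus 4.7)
The plan is to handle the two directions of the biconditional separately, in both cases working with the components of $G - S$ containing a chosen separated pair $x,y$.

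For the forward direction, assume $S$ is a minimal $xy$-separator and let $C_x$, $C_y$ be the components of $G - S$ containing $x$ and $y$ respectively (which are distinct since $S$ separates $x$ from $y$). I would show that both $C_x$ and $C_y$ are full. Fix any $s \in S$; by minimality, $S \setminus \{s\}$ is not an $xy$-separator, so there is an $x$-$y$ path $P$ in $G$ avoiding $S \setminus \{s\}$. Since $S$ itself does separate $x$ from $y$, the path $P$ must pass through $s$, so $P$ traverses an edge $us$ followed by $sv$ for some $u,v \notin S$. The prefix of $P$ from $x$ to $u$ lies entirely outside $S$, so $u \in C_x$; analogously $v \in C_y$. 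Hence every $s \in S$ has a neighbor in both $C_x$ and $C_y$, meaning $N(C_x) = N(C_y) = S$, so $C_x$ and $C_y$ are two full components of $G-S$.

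For the reverse direction, assume $G - S$ has at least two full components $C_1$ and $C_2$. Pick $x \in C_1$ and $y \in C_2$. Clearly $S$ is an $xy$-separator because $x$ and $y$ lie in different components of $G-S$. To establish minimality, fix any $s \in S$; since $C_1$ is full, $s$ has a neighbor $u \in C_1$, and since $C_2$ is full, $s$ has a neighbor $v \in C_2$. Concatenating a path from $x$ to $u$ inside $C_1$, the edges $us$ and $sv$, and a path from $v$ to $y$ inside $C_2$ gives an $x$-$y$ walk in $G$ that avoids $S \setminus \{s\}$, proving that $S \setminus \{s\}$ is not an $xy$-separator. Since $s$ was arbitrary in $S$, $S$ is a minimal $xy$-separator, hence a minimal separator.

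There is no real obstacle here; the only place requiring a little care is extracting the neighbors $u,v \notin S$ on an $x$-$y$ path in the forward direction and verifying they land in the correct components of $G - S$, which follows immediately from the fact that the appropriate sub-paths avoid $S$. The lemma is essentially a direct unpacking of the definitions of ``minimal $xy$-separator'' and ``full component.''
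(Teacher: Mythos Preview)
Your proof is correct and is the standard argument for this well-known characterization. Note, however, that the paper does not actually supply a proof of this lemma: it states the result as well-known and refers the reader to Golumbic's textbook, where it is left as an exercise. So there is no ``paper's own proof'' to compare against; your write-up is exactly the kind of argument one would expect a reader to reconstruct.
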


A minimal separator has \emph{multiplicity} $k$ if $G - S$ has $k-1$ full components.
Interestingly, clique trees contain detailed information about the minimal separators of the graph it represents, which will be useful for our proofs in later sections.

\begin{theorem}\cite{HL89,MM_Book99}\label{thm_minseps_ctedges}
Suppose $G = (V,E)$ is a chordal graph, $\treerep{}$ is a clique tree of $G$, and $S \subseteq V$.
Then $S$ is a minimal separator of $G$ if and only if there is an edge $uv$ of $T$ such that $S = \cliquemap{}(u) \cap \cliquemap{}(v)$\footnote{This theorem also implies that a minimal separator of a chordal graph is a clique.
In fact, chordal graphs are characterized by having only clique minimal separators, which is one of the earliest results on chordal graphs \cite{D61}.}.
Further, the multiplicity of $S$ is the number of edges of $T$ with this property.
\end{theorem}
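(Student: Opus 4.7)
The plan is to prove the two directions of the equivalence separately, then establish the multiplicity count. The backbone throughout is the convexity property of clique trees combined with Lemma~\ref{lem_minsep_iff_two_full_components}.

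For the forward direction, I would assume $S$ is a minimal separator and invoke Lemma~\ref{lem_minsep_iff_two_full_components} to fix two full components $C_1, C_2$ of $G - S$. For each component $C$, I would set $T_C = \bigcup_{x \in C}\{w \in V(T) : x \in \cliquemap{}(w)\}$; edge coverage plus the connectedness of $C$ make $T_C$ a subtree of $T$, and $T_{C_1} \cap T_{C_2} = \emptyset$ because a shared node would place vertices of both components in a common clique, contradicting that $C_1, C_2$ lie in different components of $G - S$. Since $C_1, C_2$ are full, every $s \in S$ has neighbors in both, so the subtree for $s$ meets both $T_{C_1}$ and $T_{C_2}$ and, by convexity, contains the entire path in $T$ between them. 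Hence $S \subseteq \cliquemap{}(u) \cap \cliquemap{}(v)$ for every edge $uv$ of that path. To force equality I would take the edge $uv$ of the path with $u \in T_{C_1}$ and $v \notin T_{C_1}$: any $x \in \cliquemap{}(u) \cap \cliquemap{}(v)$ outside $S$ lies in the clique $\cliquemap{}(u)$ with some vertex of $C_1$, which would force $x \in C_1$ and hence $v \in T_{C_1}$, a contradiction.

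For the converse, I would delete an edge $uv$ with $S = \cliquemap{}(u) \cap \cliquemap{}(v)$, splitting $T$ into subtrees $T^u \ni u$ and $T^v \ni v$, and define $A = \big(\bigcup_{w \in T^u} \cliquemap{}(w)\big) \setminus S$ and $B = \big(\bigcup_{w \in T^v} \cliquemap{}(w)\big) \setminus S$. Convexity immediately gives $A \cap B = \emptyset$ (any $x \in A \cap B$ has its subtree crossing $uv$, so $x \in \cliquemap{}(u) \cap \cliquemap{}(v) = S$), and the same argument applied to any $G$-edge between $A$ and $B$ yields a contradiction. Both sets are non-empty, since the observation following Theorem~\ref{thm_chordal_treereps} ensures $\cliquemap{}(u) \setminus \cliquemap{}(v)$ and $\cliquemap{}(v) \setminus \cliquemap{}(u)$ are non-empty. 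Finally every $s \in S$ is adjacent to every vertex in $\cliquemap{}(u) \setminus S \subseteq A$ (they sit in the clique $\cliquemap{}(u)$), so the component of $G - S$ containing $\cliquemap{}(u) \setminus S$ is full, and likewise on the $B$ side. Lemma~\ref{lem_minsep_iff_two_full_components} then identifies $S$ as a minimal separator.

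For the multiplicity claim, I would let $E_S$ be the set of edges of $T$ whose endpoint-clique intersection equals $S$ and delete them all, yielding $|E_S| + 1$ subtrees of $T$. Call a subtree \emph{non-trivial} if some clique it contains has a vertex outside $S$. A localized version of the two arguments above shows that (i)~all non-$S$ vertices appearing in the cliques of a single non-trivial subtree lie in a common component of $G - S$; (ii)~that component is full, because each $s \in S$ is adjacent to every vertex of $\cliquemap{}(u) \setminus S$ for any boundary node $u$ of the subtree; and (iii)~distinct non-trivial subtrees yield distinct full components, by the separation argument of the converse direction. The main obstacle I anticipate is the bookkeeping in step (iii), together with showing that every full component is hit by some non-trivial subtree (not left ``trapped'' inside a region whose cliques all sit in $S$); this last point is what prevents degenerate configurations and delivers the stated bijection between $E_S$ and full components.
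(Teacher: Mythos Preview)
The paper does not supply a proof of this theorem; it is quoted from \cite{HL89,MM_Book99} and used as a tool throughout. So there is no ``paper's proof'' to compare against, and I assess your argument on its own.

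Your two directions of the equivalence are correct and follow the standard line. The forward direction properly isolates the edge leaving $T_{C_1}$ on the $T$-path toward $T_{C_2}$ and uses the clique structure at $u$ to force equality; the converse correctly separates $A$ from $B$ via convexity and exhibits two full components through $\cliquemap{}(u)\setminus S$ and $\cliquemap{}(v)\setminus S$.

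The multiplicity argument has a real gap: your claim (i) is false. Take the chordal graph with maximal cliques $K_1=\{a,b,s_1,s_2\}$, $K_2=\{s_1,s_2,c\}$, $K_3=\{s_1,d\}$ and clique-tree path $K_1\text{--}K_2\text{--}K_3$. For $S=\{s_1,s_2\}$ the only edge in $E_S$ is $K_1K_2$; deleting it leaves the subtree on $\{K_2,K_3\}$, whose non-$S$ vertices are $c$ and $d$. In $G-S$ these lie in \emph{different} components, and $\{d\}$ is not even full. What does hold is that for each subtree $T'$, the set $\cliquemap{}(u)\setminus S$ for any boundary node $u\in T'$ of a deleted edge sits in a single full component independent of the choice of $u$: along the $T'$-path between two boundary nodes every clique contains $S$ (convexity), so every edge on that path has intersection strictly larger than $S$ (else it would be in $E_S$), and those surplus vertices chain together a walk in $G-S$ linking $\cliquemap{}(u_1)\setminus S$ to $\cliquemap{}(u_2)\setminus S$. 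That is the map you should use for the bijection; non-full components like $\{d\}$ above are simply not in its image. Two further remarks: every subtree is automatically ``non-trivial'' (it contains an endpoint $u$ of an $E_S$-edge, and $\cliquemap{}(u)\setminus S=\cliquemap{}(u)\setminus\cliquemap{}(v)\neq\emptyset$), so the obstacle you anticipated does not arise; and the bijection you are after is between the $|E_S|+1$ subtrees of $T-E_S$ and the full components, not between $E_S$ itself and the full components as your last sentence states.
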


We will not need all of the following characterizations of ur-chordal graphs, but we list them for completeness.

\begin{theorem}\cite{KM02,HT06}\label{thm_urchordal_characterizations}
Let $G$ be a chordal graph.
Then the following statements are equivalent.
\begin{description}
	\item[a)] $G$ is uniquely representable.
	\item[b)] If $S$ is a minimal separator of $G$, then there are exactly two maximal cliques $K$ and $K'$ of $G$ such that $S \subseteq K,K'$.
	\item[c)] Each minimal separator of $G$ has multiplicity one.
	\item[d)] There is no minimal separator of $G$ that properly contains another minimal separator of $G$.
	\item[e)] The number of minimal separators of $G$ is the number of maximal cliques of $G$ minus one.
\end{description}
\end{theorem}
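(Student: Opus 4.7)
The plan is to prove the central equivalence (a) $\Leftrightarrow$ (b) by direct tree-theoretic reasoning, then obtain (c), (d), and (e) as structural consequences via Theorem~\ref{thm_minseps_ctedges}. Throughout I would fix a clique tree $\tr{} = (T,\cliquemap{})$ of $G$ and, for each minimal separator $S$, consider the set $T_S := \{u \in V(T) : S \subseteq \cliquemap{}(u)\}$; by the convexity property this induces a subtree of $T$ whose nodes biject with the maximal cliques of $G$ that contain $S$.

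For (b) $\Rightarrow$ (a), I would show the edge set of any clique tree of $G$ is determined by $G$ alone. For any edge $uv$ of a clique tree, Theorem~\ref{thm_minseps_ctedges} gives that $\cliquemap{}(u) \cap \cliquemap{}(v)$ is a minimal separator $S$, and both endpoints lie in $T_S$. Under (b), $T_S$ has exactly two nodes---necessarily $\cliquemap{}(u)$ and $\cliquemap{}(v)$---so the edge is forced. Conversely, every minimal separator $S$ yields such an edge by Theorem~\ref{thm_minseps_ctedges}. Hence all clique trees share the same edge set on the same node set, giving (a). For (a) $\Rightarrow$ (b), I would take the contrapositive: if some minimal separator $S$ lies in $m \geq 3$ maximal cliques, then $T_S$ has $m$ nodes and admits distinct spanning tree structures. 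The critical step is to splice a different spanning tree of $T_S$ into $T$ (keeping every edge outside $T_S$ fixed) and verify the result is still a clique tree. Convexity for vertices in $S$ is immediate since their subtree covers all of $T_S$; for any vertex $x$ outside $S$, one uses the Helly property on $x$'s subtree restricted to $T_S$ to arrange the spanning tree so that $x$'s subtree remains connected, yielding a distinct second clique tree.

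For (b) $\Leftrightarrow$ (c), the subtree lemma translates multiplicity (as recorded via full components of $G - S$ and paired with Theorem~\ref{thm_minseps_ctedges}) into the size of $T_S$: multiplicity one corresponds to $T_S$ having exactly two nodes, which is (b). For (b) $\Rightarrow$ (d), suppose $S \subsetneq S'$ are both minimal separators; applying (b) to $S'$ gives two maximal cliques $K_1, K_2$ containing $S'$, which also contain $S$, and applying (b) to $S$ forces $K_1, K_2$ to be the \emph{only} maximal cliques containing $S$. By Theorem~\ref{thm_minseps_ctedges} some clique tree edge has label $S$, and it must be $K_1 K_2$---but $K_1 \cap K_2 \supseteq S' \supsetneq S$, a contradiction. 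For (d) $\Rightarrow$ (b), contrapositive: if $T_S$ has $\geq 3$ nodes, then examining the labels of edges of $T_S$ (each of which is a minimal separator containing $S$ by Theorem~\ref{thm_minseps_ctedges}) produces either a minimal separator strictly larger than $S$ (violating (d) directly) or multiple same-label edges, from which the rearrangement of the previous paragraph still yields a distinct minimal separator of $G$ properly containing $S$. For (b) $\Leftrightarrow$ (e), the bijection between minimal separators and clique tree edges established in (b) $\Rightarrow$ (a), combined with $|E(T)| = |V(T)| - 1$ and $|V(T)|$ being the number of maximal cliques, gives (e); conversely, the counts in (e) force each of the $|V(T)| - 1$ edges to carry a distinct label, so each $T_S$ is a single edge, yielding (b).

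The main obstacle is the rearrangement step inside the (a) $\Rightarrow$ (b) proof: one must exhibit a concrete alternative spanning tree of $T_S$ whose splicing into $T$ preserves convexity for every vertex of $G$ outside $S$, not just for vertices of $S$. This is the technical heart of the classical argument and is worked out in detail in the cited references \cite{KM02,HT06}; the plan above indicates how the existence of such a rearrangement reduces to the Helly property applied to the restrictions of vertex subtrees to $T_S$.
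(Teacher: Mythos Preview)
The paper does not prove this theorem; it is quoted from \cite{KM02,HT06} and used as a black box, so there is no in-paper argument to compare your proposal against.

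On the merits of your sketch: the (b) $\Rightarrow$ (a) direction is sound, and you are candid that the rearrangement in (a) $\Rightarrow$ (b) is the technical heart deferred to the references. Two of the peripheral implications, however, do not go through as written. Your (b) $\Leftrightarrow$ (c) is not the one-line translation you claim: multiplicity one for $S$ means exactly one clique-tree edge has label \emph{equal to} $S$, but $T_S$ can still have three or more nodes when other edges of the subtree $T_S$ carry labels $S' \supsetneq S$; so (c) for a single $S$ does not force $|T_S|=2$. The same confusion undermines your (e) $\Rightarrow$ (b). Your (d) $\Rightarrow$ (b) also breaks in the ``multiple same-label edges'' branch: the rearrangement you invoke there produces a second clique tree (contradicting (a)), not a new minimal separator properly containing $S$, so what you have actually argued is only $(a)\wedge(d)\Rightarrow(b)$. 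In fact (d) \emph{as phrased in the paper} does not imply the other conditions: the chordal graph with maximal cliques $\{a,b,c\}$, $\{a,b,d\}$, $\{a,b,e\}$ has $\{a,b\}$ as its sole minimal separator, so (d) holds vacuously, yet it admits three distinct clique trees and fails (a), (b), (c), (e). Either the paper's paraphrase of (d) is looser than in the original sources or an extra hypothesis is implicit; either way no argument can close the cycle through (d) exactly as it is stated here.
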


Chordal graphs can be recognized in linear time \cite{BP92}, and the maximal cliques and minimal separators of a chordal graph may be computed in linear time \cite{BP11}. 
Using property \textbf{e)} of Theorem \ref{thm_urchordal_characterizations}, this allows the recognition of ur-chordal graphs in linear time.

\section{$X-$trees, Clique Trees, and Tree Representations}

In this section, we define two operations to facilitate the discussion between $X-$trees and tree representations, and prove or state results that will be useful in later sections.
Both operations are commonly used in the literature (see \cite{S92,SS02}), but do not seem to be named as we will do here.
The results in this section will be useful for proving our characterization for maximal defining subsets.

Given a set of characters $\pc$ on $X$ and an $X-$tree $\xtree{}$, construct the chordal graph $\pig{\pc,\xt{}}$ having:
\begin{description}
	\item[$\bullet$] vertex set identical to that of $\pig{\pc}$; and
	\item[$\bullet$] an edge between $(A,\chi)$ and $(A',\chi')$ if and only if $\xt{}(A)$ and $\xt{}(A')$ intersect.
\end{description}
This graph has a tree representation $\tr{}$ with underlying tree $T$ and subtrees obtained by defining $\tr{}(A,\chi) = \xt{}(A)$ for each vertex $(A,\chi)$ of $\pig{\pc}$.
Therefore $\pig{\pc}$ is chordal by Theorem \ref{thm_chordal_treereps}.
Each edge $(A_1,\chi_1)(A_2,\chi_2)$ of $\pig{\pc}$ has cells $A_1$ and $A_2$ that share at least one member of $X$, say $a$, so $\xt{}(A_1)$ and $\xt{}(A_2)$ intersect at $\phi(a)$.
Therefore $\tr{}(A_1,\chi_1)$ and $\tr{}(A_2,\chi_2)$ intersect at $\phi(a)$ as well, so $(A_1,\chi_1)(A_2,\chi_2)$ is an edge of $\pig{\pc,\xt{}}$ by subtree intersection.
Hence each edge of $\pig{\pc}$ is also an edge of $\pig{\pc,\xt{}}$, so $\pig{\pc,\xt{}}$ is a triangulation of $\pig{\pc}$.
We will say that $\xt{}$ \emph{derives} the tree representation $\treerep{}$ of $\pig{\pc,\xt{}}$ and $\tr{}$ is \emph{derived from} $\xt{}$.
In general, $\tr{}$ is not a clique tree.

\begin{figure}[t]
\centering
\scalebox{1}{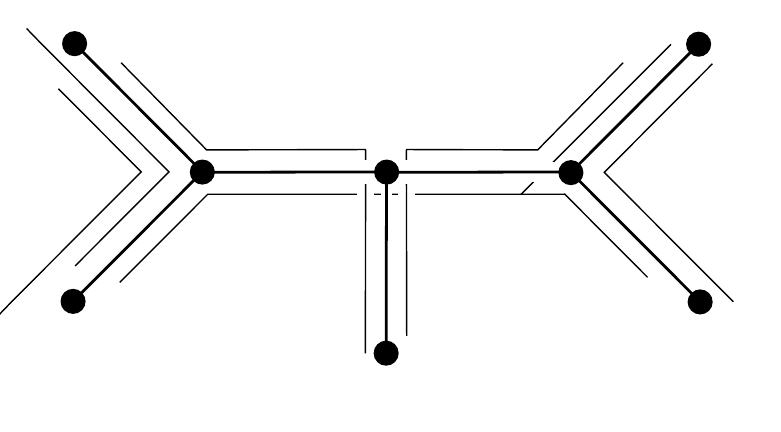}
\caption{
The tree representation $\tr{}$ of $\pig{\pc,\xt{}}$ derived from $\xt{}$ in Figure \ref{fig_pp_pig}.a.
The triangulation $\pig{\pc,\xt{}}$ of $\pig{\pc}$ is depicted in Figure \ref{fig_pp_pig}.b; note that any two subtrees of $\tr{}$ intersect precisely when the corresponding vertices of $\pig{\pc,\xt{}}$ are adjacent.
Observe that $\tr{}$ is not a clique tree; for example, the two left-most nodes map to non-maximal cliques.
Additionally, there are two nodes that map to the maximal clique $\{ (cd,\chi_1), (bde,\chi_2), (de,\chi_3) \}$.
Obtaining a clique tree from a tree representation is described in \cite{G74}.
}\label{fig_derived}
\end{figure}

\begin{observation}\label{obs_derived_properties}
Let $\pc$ be a set of partial characters on $X$, $\xt{}$ be an $X-$tree, and $\tr{}$ be the tree representation induced by $\xt{}$.
Then the underlying tree of $\tr{}$ is the underlying tree of $\xt{}$, and for all $\chi \in \pc$ and cells $A$ of $\chi$, $\xt{}(A) = \tr{}(A,\chi)$.
Further, $\pig{\pc,\xt{}}$ is a triangulation of $\pig{\pc}$.
\end{observation}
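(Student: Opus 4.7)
The plan is essentially to unpack the construction given in the paragraph that immediately precedes the observation, since the three assertions collect what was just built. I would handle the three claims in order, keeping the argument as close to the definitions as possible.

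For the first two claims, I would argue directly from the construction. The tree representation $\tr{}$ derived from $\xt{}$ is \emph{defined} to have underlying tree $T$ (the same tree as for $\xt{}$) and subtrees $\tr{}(A,\chi) := \xt{}(A)$ for each vertex $(A,\chi)$ of $\pig{\pc}$. So both the equality of underlying trees and the identity $\xt{}(A) = \tr{}(A,\chi)$ are true by construction, and the only thing really worth checking is that this is a legitimate tree representation in the sense of the Edge Coverage and Convexity properties defined earlier. Convexity is immediate because $\xt{}(A)$ is itself a subtree of $T$ by definition of the notation $\xt{}(A)$ (the minimal subtree containing $\phi(A)$). Edge Coverage follows from the Helly property for subtrees of a tree together with the defining rule that $(A,\chi)$ and $(A',\chi')$ are adjacent in $\pig{\pc,\xt{}}$ precisely when $\xt{}(A)$ and $\xt{}(A')$ intersect.

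For the third claim, I would split the verification that $\pig{\pc,\xt{}}$ is a triangulation of $\pig{\pc}$ into two parts. First, chordality: since $\tr{}$ is a tree representation of $\pig{\pc,\xt{}}$ by the argument above, Theorem \ref{thm_chordal_treereps} gives that $\pig{\pc,\xt{}}$ is chordal. Second, the supergraph property: if $(A_1,\chi_1)(A_2,\chi_2)$ is an edge of $\pig{\pc}$, then $A_1 \cap A_2 \neq \emptyset$, so we may pick some $a \in A_1 \cap A_2$; then $\phi(a)$ is a node of both $\xt{}(A_1)$ and $\xt{}(A_2)$, so these two subtrees intersect, which by the defining rule puts $(A_1,\chi_1)(A_2,\chi_2)$ into the edge set of $\pig{\pc,\xt{}}$.

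The main obstacle here is essentially nonexistent: this observation is a bookkeeping statement that gathers the content of the paragraph preceding it into a named result for convenient reference in later sections. The only subtlety worth flagging explicitly is that the vertex sets of $\pig{\pc}$ and $\pig{\pc,\xt{}}$ are identical by stipulation, which is what makes the supergraph comparison in the third claim meaningful in the first place.
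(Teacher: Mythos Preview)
Your proposal is correct and mirrors exactly what the paper does: the observation has no separate proof in the paper because it simply collects the content of the paragraph immediately preceding it, which establishes the three claims in the same way and order you describe (construction gives the first two; Theorem~\ref{thm_chordal_treereps} gives chordality; the shared element $a \in A_1 \cap A_2$ gives the supergraph property via $\phi(a)$). Your explicit verification of Edge Coverage and Convexity is a small elaboration beyond what the paper spells out, but the approach is identical.
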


\begin{lemma}\label{lem_defxtree_is_cliquetree}
Let $\pc$ be a set of partial characters on $X$, and $\xt{}$ be an $X-$tree that displays $\pc' \subseteq \pc$.
Suppose that each edge of $\xt{}$ is distinguished by $\pc'$.
Then the tree representation of $\pig{\pc,\xt{}}$ derived from $\xt{}$ is a clique tree of $\pig{\pc,\xt{}}$.
\end{lemma}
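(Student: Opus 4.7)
The plan is to verify directly that the derived $\tr{} = (T, \cliquemap{})$, where $\cliquemap{}(v) = \{(A,\chi) : v \in \xt{}(A)\}$, is a clique tree of $\pig{\pc,\xt{}}$. By the definition given in the excerpt this reduces to three things: (i) each $\cliquemap{}(v)$ is a maximal clique of $\pig{\pc,\xt{}}$; (ii) $v \mapsto \cliquemap{}(v)$ is injective; and (iii) every maximal clique of $\pig{\pc,\xt{}}$ is in the image. Clique-ness of $\cliquemap{}(v)$ is immediate because any two of its elements have subtrees intersecting at $v$, so they are adjacent in $\pig{\pc,\xt{}}$ by construction. Surjectivity is handed to us by the Helly-type statement recalled in the paper: every maximal clique $K$ is contained in some $\cliquemap{}(u)$, and since $\cliquemap{}(u)$ itself is a clique, maximality of $K$ forces $K = \cliquemap{}(u)$.

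The remaining two properties are where the distinguishing hypothesis does its work. For injectivity, given distinct $u, v \in V(T)$ I would let $vw$ be the first edge of the $v$-to-$u$ path in $T$. By hypothesis this edge is distinguished, so there exist cells $A, A'$ of some $\chi \in \pc'$ with $v \in \xt{}(A)$ and $w \in \xt{}(A')$; since $\xt{}$ displays $\chi$, $\xt{}(A) \cap \xt{}(A') = \emptyset$, so $w \notin \xt{}(A)$. If $(A,\chi)$ were also in $\cliquemap{}(u)$, then $\xt{}(A)$, being a subtree containing both $v$ and $u$, would contain every node on the $v$-to-$u$ path, in particular $w$, a contradiction. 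Hence $(A,\chi) \in \cliquemap{}(v) \setminus \cliquemap{}(u)$.

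The step I expect to take the most care, and the real obstacle, is maximality of $\cliquemap{}(v)$. Given $(B,\chi^\ast) \notin \cliquemap{}(v)$, i.e.\ $v \notin \xt{}(B)$, I need to exhibit some element of $\cliquemap{}(v)$ whose subtree is disjoint from $\xt{}(B)$. My plan is to let $u$ be the unique node of $\xt{}(B)$ closest to $v$ in $T$, let $w$ be the first step on the $v$-to-$u$ path, and again choose $\chi \in \pc'$ distinguishing the edge $vw$ with cells $A, A'$ satisfying $v \in \xt{}(A)$ and $w \in \xt{}(A')$. The geometric claim to verify is that $\xt{}(A) \cap \xt{}(B) = \emptyset$: removing the edge $vw$ from $T$ yields two components, $\xt{}(A)$ is a connected subgraph containing $v$ but not $w$ and so lies entirely on the $v$-side, while $\xt{}(B)$ omits $v$ but contains $u$ (which sits on the $w$-side of the removed edge) and so lies entirely on the $w$-side. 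This yields $(A,\chi) \in \cliquemap{}(v)$ non-adjacent to $(B,\chi^\ast)$ in $\pig{\pc,\xt{}}$, forbidding any extension of $\cliquemap{}(v)$ and completing the argument.
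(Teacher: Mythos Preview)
Your argument is correct and rests on the same mechanism as the paper's: pick the first edge $vw$ on the relevant path, use that it is distinguished by some $\chi\in\pc'$ to obtain cells $A,A'$ with $v\in\xt{}(A)$, $w\in\xt{}(A')$, and use that $\xt{}$ displays $\chi$ to separate the two sides. The only difference is packaging. The paper observes that, once you know each $\cliquemap{}(v)$ is a clique and every maximal clique is some $\cliquemap{}(u)$ (the Helly step you also invoke), failure to be a clique tree is equivalent to the existence of $u\neq v$ with $\cliquemap{}(v)\subseteq\cliquemap{}(u)$; it then derives one contradiction from the edge $vv'$ adjacent to $v$ toward $u$, which handles your injectivity and maximality checks in a single stroke. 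Your route treats them separately, and your maximality step (manufacturing a non-neighbour of an arbitrary $(B,\chi^\ast)\notin\cliquemap{}(v)$ via the closest point of $\xt{}(B)$) is more geometric than needed: a non-maximal $\cliquemap{}(v)$ would sit inside some maximal clique $\cliquemap{}(u)$, and then your injectivity argument applies verbatim. Both versions are short; the paper's single no-containment reduction is just a slightly tighter wrapper around the same idea.
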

\begin{proof}
Let $\treerep{}$ be the tree representation of $\pig{\pc,\xt{}}$ derived from $\xt{}$.
For the sake of contradiction, assume that $\tr{}$ is not a clique tree, so that $\cliquemap{}$ is not a one-to-one map between the nodes of $T$ and the maximal cliques of $\pig{\pc,\xt{}}$.
Then there must be nodes $u$ and $v$ of $T$ such that $\cliquemap{}(v) \subseteq \cliquemap{}(u)$.
Let $v'$ be the closest node to $v$ between $u$ and $v$ in $T$, allowing the possibility that $v' = u$.
Note that each vertex in $\cliquemap{}(v)$ is also a vertex of $\cliquemap{}(v')$ by convexity, so $\cliquemap{}(v) \subseteq \cliquemap{}(v')$.

Now, $vv'$ is distinguished by some $\chi \in \pc'$, so there are distinct cells $A$ and $A'$ of $\chi$ such that $v$ is a node of $\xt{}(A)$ and $v'$ is a node of $\xt{}(A')$.
By Observation \ref{obs_derived_properties}, $v$ is a node of $\tr{}(A,\chi)$, so $(A,\chi) \in \cliquemap{}(v)$ and by containment $(A,\chi) \in \cliquemap{}(v')$.
Further, $(A',\chi) \in \cliquemap{}(v')$, so $(A,\chi)(A',\chi)$ is a fill edge of $\pig{\pc,\xt{}}$ and it breaks $\chi$.
This contradicts the assumption that $\xt{}$ displays $\pc'$, so $\tr{}$ must be a clique tree of $\pig{\pc,\xt{}}$.
\qed\end{proof}

\begin{lemma}\label{lem_define_implies_urchordal}
Let $\pc$ be a set of partial characters on $X$ and $\xt{}$ be an $X-$tree that displays $\pc' \subseteq \pc$.
Suppose that $\xt{}$ is free, ternary, and each edge of $\xt{}$ is distinguished by $\pc'$.
Then $\pig{\pc,\xt{}}$ is uniquely representable.
\end{lemma}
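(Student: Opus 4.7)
The plan is to invoke Lemma~\ref{lem_defxtree_is_cliquetree} to conclude that the tree representation $\treerep{}$ derived from $\xt{}$ is already a clique tree of $\pig{\pc,\xt{}}$, and then verify characterization (b) of Theorem~\ref{thm_urchordal_characterizations}: every minimal separator of $\pig{\pc,\xt{}}$ is contained in exactly two of its maximal cliques. By Theorem~\ref{thm_minseps_ctedges} applied to this clique tree, every minimal separator has the form $S_{uv}=\cliquemap{}(u)\cap\cliquemap{}(v)$ for some edge $uv$ of $T$, and because $\tr{}$ is a clique tree the maximal cliques of $\pig{\pc,\xt{}}$ are precisely the sets $\cliquemap{}(w)$ for $w\in V(T)$, in bijection with the nodes of $T$.

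Suppose, toward contradiction, that some $S_{uv}$ is contained in a third maximal clique $\cliquemap{}(w)$ with $w\notin\{u,v\}$. Removing the edge $uv$ splits $T$ into two subtrees; by symmetry, assume $w$ lies in the same subtree as $u$, and let $u'$ be the neighbor of $u$ on the unique path from $u$ to $w$. Then $u'\neq v$. Convexity forces every $(A,\chi)\in S_{uv}\subseteq \cliquemap{}(u)\cap\cliquemap{}(w)$ to belong to $\cliquemap{}(u')$ as well, so $S_{uv}\subseteq \cliquemap{}(u)\cap\cliquemap{}(u')=S_{uu'}$.

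Next I would derive a contradiction by exhibiting an element of $S_{uv}\setminus S_{uu'}$. Since $u$ has two distinct neighbors $v$ and $u'$, it is not a leaf, and by the ternary hypothesis $u$ is internal of degree three, with a third neighbor $z$. Applying the distinguishing hypothesis to the edge $uu'$ produces a character $\chi\in\pc'$ and distinct cells $B,B'$ of $\chi$ with $u\in\xt{}(B)$ and $u'\in\xt{}(B')$. Because $\xt{}$ displays $\chi$, the subtrees $\xt{}(B)$ and $\xt{}(B')$ are disjoint, forcing $u'\notin\xt{}(B)$. Since $\xt{}$ is free, $\phi(B)$ consists only of leaves of $T$ and in particular omits the internal node $u$; hence $u\in\xt{}(B)$ requires $u$ to lie on a path between two members of $\phi(B)$, which forces $\phi(B)$ to contain leaves in at least two of the three directions at $u$. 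The $u'$-direction is ruled out, for a leaf of $\phi(B)$ there would drag $u'$ into $\xt{}(B)$; therefore $\phi(B)$ has leaves in both the $v$-direction and the $z$-direction from $u$, which yields $v\in\xt{}(B)$. Consequently $(B,\chi)\in\cliquemap{}(u)\cap\cliquemap{}(v)=S_{uv}$, while $u'\notin\xt{}(B)$ gives $(B,\chi)\notin S_{uu'}$, contradicting $S_{uv}\subseteq S_{uu'}$.

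The main obstacle is recognizing the right reduction: a potential third maximal clique $\cliquemap{}(w)$ containing $S_{uv}$ can sit arbitrarily far from the edge $uv$ in $T$, and only via convexity in the clique tree can one pull this global condition back to the local assertion $S_{uv}\subseteq S_{uu'}$ for a neighboring edge $uu'$. Once that localization is in place, the three structural hypotheses on $\xt{}$ --- that it displays $\pc'$, is ternary, and is free while being distinguished by $\pc'$ --- each play a distinct role in fabricating the witness $(B,\chi)\in S_{uv}\setminus S_{uu'}$.
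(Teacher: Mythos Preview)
Your proof is correct and follows essentially the same route as the paper. The only difference is cosmetic: you invoke characterization~(b) of Theorem~\ref{thm_urchordal_characterizations} (each minimal separator lies in exactly two maximal cliques) while the paper uses characterization~(d) (no proper containment among minimal separators); both set-ups collapse via convexity to the same local claim that $S_{uv}\not\subseteq\cliquemap{}(u')$ for a neighbor $u'\neq v$ of $u$, and your construction of the witness $(B,\chi)$ from the free, ternary, and distinguishing hypotheses is identical to the paper's.
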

\begin{proof}
To prove that $\pig{\pc,\xt{}}$ is uniquely representable, we use Theorem \ref{thm_urchordal_characterizations} and show that there are no containment relationships between the minimal separators of $\pig{\pc,\xt{}}$.
Working towards a contradiction, assume that $S \subset S'$ are minimal separators of $\pig{\pc,\xt{}}$, and let $\treerep{}$ be the tree representation derived from $\xt{}$.
Then $\tr{}$ is a clique tree by Lemma \ref{lem_defxtree_is_cliquetree}, and there are edges $uv$ and $u'v'$ of $T$ such that $S = \cliquemap{}(u) \cap \cliquemap{}(v)$ and $S' = \cliquemap{}(u') \cap \cliquemap{}(v')$ by Theorem \ref{thm_minseps_ctedges}.
Without loss of generality, assume that the path from $v$ to $v'$ does not contain either $u$ or $u'$ (perhaps $v = v'$).
Let $w$ be the node on this path adjacent to $v$ if $v \neq v'$, otherwise let $w = u'$.
In either case, $\cliquemap{}(w)$ contains $S$: if $w = u'$, then $S \subseteq \cliquemap{}(w)$.
Otherwise, each $(A,\chi)$ in $S$ is an element of both $\cliquemap{}(u)$ and $\cliquemap{}(u')$, so $(A,\chi) \in \cliquemap{}(w)$ by convexity, and $S \subseteq \cliquemap{}(w)$.
	
To complete the proof, we will obtain a contradiction by showing that $S$ has a vertex not in $\cliquemap{}(w)$.
There is a character $\chi'$ in $\pc'$ that distinguishes $wv$, and distinct cells $A'$ and $A''$ of $\chi'$ such that $v$ is a node of $\xt{}(A')$ and $w$ is a node of $\xt{}(A'')$.
Now, $v$ has at least two neighbors, and because $\xt{}$ is ternary, $v$ must have degree three.
Also, $v$ is not mapped to by $\phi$ because $\xt{}$ is free, so in order for $v$ to be a node of $\xt{}(A')$, there must be at least two nodes of $\xt{}$ in $\phi(A')$ that are not $v$, and the path between these two nodes must contain $v$.
This path must also contain two of $v$'s neighbors, and neither of these vertices can be $w$, because $w$ is not a node of $\xt{}(A')$.
Thus $u$ must be on this path, so it is a node of $\xt{}(A')$.
Both $\cliquemap{}(u)$ and $\cliquemap{}(v)$ contain $(A',\chi')$ by Observation \ref{obs_derived_properties}, so it must be a vertex of $S = \cliquemap{}(u) \cap \cliquemap{}(v)$.
Further, $(A',\chi') \notin \cliquemap{}(w)$ because $w$ is not a node of $\xt{}(A') = \tr{}(A',\chi')$.
This is impossible because we have shown that both $(A',\chi') \in S - \cliquemap{}(w)$ and $S \subseteq \cliquemap{}(w)$.
Thus there are no containment relationships between the minimal separators of $\pig{\pc,\xt{}}$, and $\pig{\pc,\xt{}}$ is uniquely representable by Theorem \ref{thm_urchordal_characterizations}.
\qed\end{proof}

\begin{lemma}\cite{BHS05}\label{lem_xtree_displays_legalcharacters}
Let $\pc$ be a set of partial characters on $X$, $\xt{}$ an $X-$tree, and $\pc'$ be the subset of $\pc$ displayed by $\xt{}$.
Then $\pig{\pc,\xt{}}$ is a triangulation of $\pig{\pc}$ in which the displayed characters are $\pc'$.
\end{lemma}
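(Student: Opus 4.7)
The plan is to unpack the definitions and verify, character by character, that the displayed characters of the triangulation $\pig{\pc,\xt{}}$ are exactly the characters displayed by $\xt{}$. The first conclusion, that $\pig{\pc,\xt{}}$ is a triangulation of $\pig{\pc}$, is already established in the construction preceding Observation \ref{obs_derived_properties} and can simply be invoked.

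For the identification of displayed characters, I would fix an arbitrary $\chi \in \pc$ and prove that $\chi$ is displayed by the triangulation $\pig{\pc,\xt{}}$ if and only if $\xt{}$ displays $\chi$. The crucial observation, already noted after the definition of the partition intersection graph, is that for any partial character $\chi$ with distinct cells $A$ and $A'$, the vertices $(A,\chi)$ and $(A',\chi)$ are non-adjacent in $\pig{\pc}$ (since the cells of a partition are pairwise disjoint). Consequently, whenever such an edge $(A,\chi)(A',\chi)$ appears in $\pig{\pc,\xt{}}$, it is necessarily a fill edge, and by definition this fill edge breaks $\chi$.

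By construction of $\pig{\pc,\xt{}}$, the edge $(A,\chi)(A',\chi)$ is present precisely when $\xt{}(A)$ and $\xt{}(A')$ intersect. Chaining these equivalences, $\chi$ is broken in $\pig{\pc,\xt{}}$ iff there exist distinct cells $A$, $A'$ of $\chi$ with $\xt{}(A) \cap \xt{}(A') \neq \emptyset$, iff $\xt{}$ does not display $\chi$. Taking the contrapositive, $\chi$ is among the displayed characters of $\pig{\pc,\xt{}}$ iff $\chi \in \pc'$, which is the desired conclusion.

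There is no real obstacle here beyond careful bookkeeping: the construction of $\pig{\pc,\xt{}}$ was designed precisely to encode the intersection behaviour of the subtrees $\xt{}(A)$, and the displayed-character condition is exactly the statement that no two same-character subtrees intersect. The only mildly subtle point worth emphasizing in the write-up is that same-character vertices are guaranteed to be non-adjacent in $\pig{\pc}$, so that every intersection-induced edge between cells of a single character is genuinely a fill edge rather than a pre-existing edge of $\pig{\pc}$.
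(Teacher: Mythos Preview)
Your proof is correct. Note, however, that the paper does not actually supply its own proof of this lemma: it is stated with a citation to \cite{BHS05} and left unproved. Your argument is the natural one and is essentially what one finds (implicitly or explicitly) in that reference: the definition of $\pig{\pc,\xt{}}$ is rigged so that same-character edges occur precisely when the corresponding cell-subtrees of $\xt{}$ meet, and since cells of a single character are disjoint, any such edge is automatically a fill edge that breaks the character. There is nothing to add or correct.
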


The previous three lemmas can be summarized as follows.

\begin{theorem}\label{thm_almostdefine_urchordal}
Let $\pc$ be a set of partial characters on $X$ and $\xt{}$ be an $X-$tree that displays $\pc' \subseteq \pc$.
Suppose that $\xt{}$ is free, ternary, and each edge of $\xt{}$ is distinguished by $\pc'$.
Then $\pig{\pc,\xt{}}$ is a uniquely representable chordal graph, and the tree representation derived from $\xt{}$ is its unique clique tree.
Further, the displayed characters of $\pig{\pc,\xt{}}$ are exactly $\pc'$.
\end{theorem}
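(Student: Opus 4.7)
The plan is to observe that the theorem is essentially a packaging of the three preceding lemmas, so the proof reduces to chaining them together under the shared hypotheses. The hypotheses on $\xt{}$ being free, ternary, displaying $\pc'$, and having every edge distinguished by $\pc'$ are exactly the combined hypotheses of Lemmas \ref{lem_defxtree_is_cliquetree}, \ref{lem_define_implies_urchordal}, and \ref{lem_xtree_displays_legalcharacters}, so there is no genuinely new argument to make.

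Concretely, I would proceed as follows. Let $\treerep{}$ denote the tree representation of $\pig{\pc,\xt{}}$ derived from $\xt{}$, which is well-defined by Observation \ref{obs_derived_properties}. First, invoking Lemma \ref{lem_defxtree_is_cliquetree} (whose hypotheses are that $\xt{}$ displays $\pc'$ and that every edge of $\xt{}$ is distinguished by $\pc'$), I conclude that $\tr{}$ is a clique tree of $\pig{\pc,\xt{}}$. Second, I apply Lemma \ref{lem_define_implies_urchordal}, which additionally uses that $\xt{}$ is free and ternary, to deduce that $\pig{\pc,\xt{}}$ is uniquely representable. By the definition of ur-chordal, $\pig{\pc,\xt{}}$ admits exactly one clique tree; combined with the previous sentence, this clique tree must be $\tr{}$, giving the first two conclusions simultaneously.

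For the final conclusion, I apply Lemma \ref{lem_xtree_displays_legalcharacters}: since $\pig{\pc,\xt{}}$ is a triangulation of $\pig{\pc}$ whose displayed characters are precisely the subset of $\pc$ displayed by $\xt{}$, and that subset is $\pc'$ by the theorem's hypothesis (read as identifying $\pc'$ with the full displayed set, which is forced by the claim being an equality), the displayed characters of $\pig{\pc,\xt{}}$ are exactly $\pc'$.

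The only mild subtlety, and the one place one should pause, is the interpretation of ``$\xt{}$ displays $\pc' \subseteq \pc$'': Lemma \ref{lem_xtree_displays_legalcharacters} takes $\pc'$ to be the complete subset of $\pc$ displayed by $\xt{}$, whereas a literal reading of the theorem's hypothesis allows $\pc'$ to be only a subset of what is displayed. Since the theorem asserts that the displayed characters of $\pig{\pc,\xt{}}$ are \emph{exactly} $\pc'$, the two readings must agree, so we tacitly take $\pc'$ to be the maximal displayed subset. With this convention the three lemmas immediately compose into the claimed statement, and I do not anticipate any additional obstacle.
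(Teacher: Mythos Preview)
Your proposal is correct and matches the paper exactly: the paper does not give a separate proof of this theorem at all, but simply introduces it with the sentence ``The previous three lemmas can be summarized as follows,'' so it is indeed nothing more than the composition of Lemmas \ref{lem_defxtree_is_cliquetree}, \ref{lem_define_implies_urchordal}, and \ref{lem_xtree_displays_legalcharacters} that you describe. Your observation about the reading of ``displays $\pc' \subseteq \pc$'' is apt; the paper silently adopts the same convention, and in the only place the theorem is invoked (Proposition \ref{mainprop_1}) $\pc'$ is already known to be the full displayed set.
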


Now suppose that $\treerep{}$ is a clique tree of a triangulation $H$ of $\pig{\pc}$, with the goal of defining an $X-$tree $\xt{}$.
The discussion we provide here is standard, e.g.\ see \cite{SS_Book03,BHS05}.
Construct a map $\phi$ from $X$ to $T$ by defining, for each $a \in X$, $\phi(a) = v$ if and only if $\cliquemap(v)$ contains every vertex of $\pig{\pc}$ whose cell contains $a$.
These vertices form a clique because $a$ is contained in each of their cells.
Because we have only added fill edges to obtain $H$, this clique a subset of a maximal clique of $H$, and hence $v$ exists.
There may be more than one choice for $v$, each of which we call a \emph{candidate node} for $a$.
Let $u$ be a leaf of $T$ with neighbor $w$.
Then $\cliquemap{}(u)$ is a maximal clique that contains a vertex $(A,\chi)$ of $\pig{\pc}$ that is not found in $\cliquemap{}(w)$.
By convexity, $u$ is the only node of $T$ whose corresponding maximal clique contains $(A,\chi)$.
Each $a' \in A$ has $u$ as its unique candidate node, and hence every leaf of $T$ is a unique candidate node for at least one element of $X$.
Thus each leaf of $T$ must be labeled by $\phi$.
To finish constructing an $X-$tree, obtain $T'$ by suppressing any unlabeled nodes of $T$ that have degree two.
The result is an $X-$tree $\xt{} = (T',\phi)$, and we say that $\tr{}$ \emph{induces} $\xt{}$ and $\xt{}$ is \emph{induced by} $\tr{}$.
We emphasize that the underlying tree of $\xt{}$ need not be the same as the underlying tree of $\tr{}$.
Note that, because an element of $X$ may have multiple candidate nodes, $\tr{}$ may induce multiple $X-$trees.
Next, we show that when $H$ is a minimal triangulation, much of $\xt{}$'s structure is described by $\tr{}$.
The following lemma will be useful.

\begin{lemma}\label{lem_mintri_fill_ms}\cite{PS97,KKS97}
Let $G$ be a graph and $H$ be a minimal triangulation of $G$.
If $uv$ is a fill edge of $H$, then there is a minimal separator of $H$ that contains both $u$ and $v$.
\end{lemma}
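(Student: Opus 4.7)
The plan is to invoke Theorem~\ref{thm_minseps_ctedges}, which identifies each minimal separator of $H$ with the intersection of the two bags at some edge of a clique tree. Fix a clique tree $\treerep{}$ of $H$, and for each vertex $x$ of $H$ let $T_x$ denote the subtree of $T$ induced by $\{v' \in V(T) \mid x \in \cliquemap{}(v')\}$, which is connected by Convexity. Since $uv \in E(H)$, Edge Coverage gives $T_u \cap T_v \neq \emptyset$. The goal is to show that $T_u \cap T_v$ contains an edge of $T$: the intersection of the two bags at that edge is then, by Theorem~\ref{thm_minseps_ctedges}, a minimal separator of $H$ containing both $u$ and $v$.

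Assume for contradiction that $T_u \cap T_v$ is a single node $w$. Minimality of $H$ forces $H - uv$ to be non-chordal, so it contains a chordless cycle $C$ of length at least four. Since $H$ is chordal, $C$ has a chord in $H$; the only edge removed is $uv$, so that chord must be $uv$. Hence $u, v \in V(C)$ are non-adjacent on $C$, splitting $C$ into two internally-disjoint paths from $u$ to $v$ with interior vertex sequences $x_1, \ldots, x_p$ and $y_1, \ldots, y_q$, where $p, q \geq 1$.

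The key geometric fact is that when two subtrees of $T$ meet in exactly one node $w$, every connected subgraph of $T$ touching both subtrees must contain $w$ (any path between a point of one subtree and a point of the other is forced through $w$ by uniqueness of paths in $T$). I apply this to the union $\bigcup_{i=1}^p T_{x_i}$: it is a connected subtree of $T$ because consecutive $x_i$'s are adjacent in $H$ (so consecutive terms intersect), and it meets $T_u$ via $T_{x_1}$ (since $x_1 \sim u$) and $T_v$ via $T_{x_p}$ (since $x_p \sim v$). Hence some $T_{x_i}$ contains $w$, i.e.\ $x_i \in \cliquemap{}(w)$; by the symmetric argument on the other path, some $y_j \in \cliquemap{}(w)$. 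Since $\cliquemap{}(w)$ is a clique of $H$, $x_i y_j \in E(H)$; and since $\{x_i, y_j\} \cap \{u, v\} = \emptyset$, this edge survives in $H - uv$, where it forms a chord of $C$ (as $x_i$ and $y_j$ lie on different paths of $C$ between $u$ and $v$, so are non-consecutive on $C$). This contradicts chordlessness of $C$.

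The anticipated main obstacle is the tree-geometric observation in the third paragraph, which requires a brief case analysis depending on whether the union meets each of $T_u, T_v$ only at $w$ or also elsewhere. Once it is in hand, the remaining steps are routine applications of Convexity, Edge Coverage, and Theorem~\ref{thm_minseps_ctedges}.
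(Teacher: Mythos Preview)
Your argument is correct. The paper does not actually prove Lemma~\ref{lem_mintri_fill_ms}; it is imported from \cite{PS97,KKS97} and stated without proof, so there is no in-paper argument to compare against. What you have supplied is a self-contained proof using only the clique-tree apparatus already set up in the paper (Edge Coverage, Convexity, Theorem~\ref{thm_minseps_ctedges}), which is a pleasant bonus. The tree-geometric step you flag as the main obstacle is easily handled: if the connected subtree $S=\bigcup_i T_{x_i}$ meets $T_u$ or $T_v$ at $w$ itself you are done, and otherwise you pick $a'\in S\cap T_u$ and $b'\in S\cap T_v$ with $a',b'\neq w$; concatenating the $a'$--$w$ path in $T_u$ with the $w$--$b'$ path in $T_v$ gives a walk whose only possible repeated node would lie in $T_u\cap T_v=\{w\}$, so it is already the unique $a'$--$b'$ path in $T$ and hence lies in $S$. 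The remaining steps (that $x_i\neq y_j$, that neither equals $u$ or $v$, and that they are non-consecutive on $C$) are exactly as you state.
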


\begin{figure}[t]
\centering
\subfigure[$\tr{'}$]{
   \scalebox{.78}{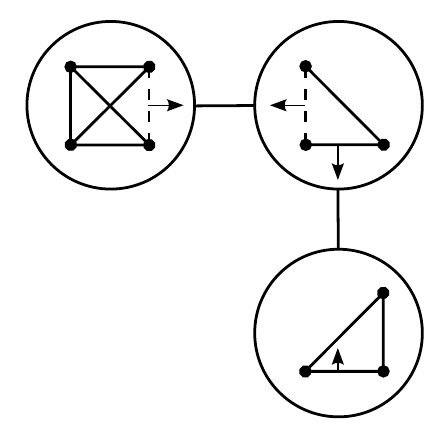}
 }
\subfigure[$\xt{'}$]{
   \scalebox{.78}{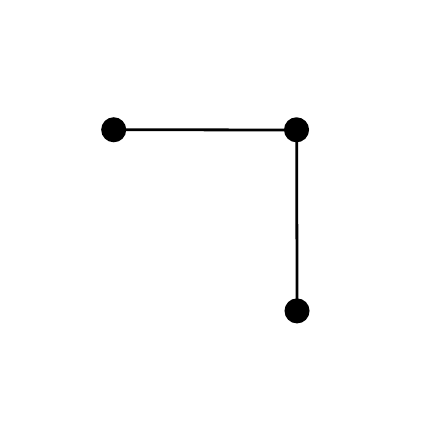}
 }
\caption{
A clique tree $\tr{'}$ of $\pig{\pc,\xt{}}$ from Figure \ref{fig_pp_pig}.b and an $X-$tree $\xt{'}$ induced by $\tr{'}$.
Note that $\pig{\pc,\xt{'}} = \pig{\pc,\xt{}}$, where $\xt{}$ is the $X-$tree from Fig.\ \ref{fig_pp_pig}.a.
}\label{fig_induced}
\end{figure}

Though not stated in this form, the following lemma follows from the proof of Lemma 2.4 and the statement of Corollary 2.5 in \cite{SS02}.

\begin{lemma}\label{lem_mintri_equals_xtint}
Let $H$ be a minimal triangulation of $\pig{\pc}$, and suppose $\xt{}$ is induced by a clique tree of $H$.
Then $H = \pig{\pc,\xt{}}$.
\end{lemma}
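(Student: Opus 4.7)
The plan is to prove $H = \pig{\pc, \xt{}}$ by leveraging the minimality of $H$ as a sandwich lever. First, note that by Observation~\ref{obs_derived_properties}, $\pig{\pc, \xt{}}$ is a chordal graph containing $\pig{\pc}$, hence is itself a triangulation of $\pig{\pc}$. Consequently, if I can establish the inclusion $\pig{\pc, \xt{}} \subseteq H$, then the chain $\pig{\pc} \subseteq \pig{\pc, \xt{}} \subseteq H$ combined with the minimality of $H$ (no proper chordal subgraph of $H$ contains $\pig{\pc}$) immediately yields $\pig{\pc, \xt{}} = H$. The entire task therefore reduces to checking that every edge of $\pig{\pc, \xt{}}$ is an edge of $H$.

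Let $\treerep{}$ denote the clique tree of $H$ that induces $\xt{} = (T', \phi)$, and for a cell $A$ of some $\chi \in \pc$ write $S_A$ for the minimal subtree of $T$ (the clique tree's underlying tree, not $T'$) spanning $\phi(A)$. The first key step is the containment $S_A \subseteq \tr{}(A, \chi)$ in $T$. This holds because each $\phi(a)$ with $a \in A$ is, by construction, a candidate node for $a$, so $(A, \chi) \in \cliquemap{}(\phi(a))$; that is, $\phi(A) \subseteq \tr{}(A, \chi)$. Since $\tr{}(A, \chi)$ is a subtree of $T$ by convexity, it contains the minimal spanning subtree $S_A$.

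The second step is to lift intersections from $T'$ back to $T$. Because $T'$ is obtained from $T$ by suppressing only unlabeled degree-two nodes, every node of $T'$ is also a node of $T$, and the unique $T'$-path between two labeled nodes visits precisely the non-suppressed nodes of the corresponding $T$-path. Hence if $v'$ is a node at which $\xt{}(A_1)$ and $\xt{}(A_2)$ intersect in $T'$, then viewing $v'$ as a node of $T$ we have $v' \in S_{A_1} \cap S_{A_2}$, and by the first step $v' \in \tr{}(A_1, \chi_1) \cap \tr{}(A_2, \chi_2)$. Thus $(A_1, \chi_1)$ and $(A_2, \chi_2)$ both lie in the clique $\cliquemap{}(v')$ of $H$ and are therefore adjacent in $H$, completing the inclusion.

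The main obstacle I anticipate is the bookkeeping between the two trees $T$ and $T'$: one must argue cleanly that the suppression operation does not manufacture spurious intersections in $T'$, so that shared nodes of $\xt{}(A_1)$ and $\xt{}(A_2)$ in $T'$ genuinely correspond to shared nodes of $S_{A_1}$ and $S_{A_2}$ in $T$. This is true precisely because suppressed nodes are by construction unlabeled and degree two, so they cannot serve as endpoints of any $S_A$; hence $\xt{}(A)$ is just $S_A$ with these pass-through nodes deleted. Notably, Lemma~\ref{lem_mintri_fill_ms} is not needed for this direction, and the minimality hypothesis on $H$ intervenes only at the final sandwich step.
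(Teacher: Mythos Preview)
Your argument is correct. The paper does not actually supply a proof of this lemma; it simply remarks that the statement follows from Lemma~2.4 and Corollary~2.5 of Semple and Steel~\cite{SS02}. Your proposal therefore goes beyond what the paper does by providing a self-contained argument.

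The sandwich strategy you chose is exactly the natural one: establish $\pig{\pc,\xt{}}\subseteq H$ and then let minimality of $H$ together with Observation~\ref{obs_derived_properties} finish the job. Your first step, that $\phi(A)\subseteq \tr{}(A,\chi)$ and hence $S_A\subseteq \tr{}(A,\chi)$ by convexity, is precisely the content of the first paragraph of the proof of Lemma~\ref{lem_mintri_subtrees} (which the paper proves independently), so you are on well-trodden ground. Your handling of the two underlying trees $T$ and $T'$ is also correct: the key observation, which you state, is that the nodes of $\xt{}(A)$ in $T'$ are exactly the non-suppressed nodes of $S_A$ in $T$, so a common node of $\xt{}(A_1)$ and $\xt{}(A_2)$ in $T'$ is automatically a common node of $S_{A_1}$ and $S_{A_2}$ in $T$. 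In fact, by Lemma~\ref{lem_mintri_underlying_tree} one eventually learns that $T'=T$ when $H$ is minimal, but you rightly do not assume this since that lemma is stated after the one you are proving; your more careful treatment avoids any circularity.
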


\begin{lemma}\label{lem_mintri_underlying_tree}
Let $H$ be a minimal triangulation of $\pig{\pc}$, $\treerep{}$ be a clique tree of $H$, and suppose $\tr{}$ induces $\xt{}$.
Then the underlying tree of $\xt{}$ is $T$.
\end{lemma}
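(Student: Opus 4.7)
The plan is a proof by contradiction using the minimality of $H$. Suppose the underlying tree $T'$ of $\xt{}$ is not $T$; since $T'$ is obtained from $T$ by suppressing unlabeled degree-two nodes, there must be a degree-two node $v$ of $T$ that is unlabeled in some valid choice of $\phi$. Write $V_a := \bigcap_{(A,\chi):\,a\in A} \tr{}(A,\chi)$ for the set of candidate nodes of $a\in X$; if any $a$ had $V_a=\{v\}$ then $\phi(a)=v$ would be forced, so for $v$ to be unlabeled we must have $V_a\neq\{v\}$ for every $a\in X$. Let $u,w$ be the two neighbors of $v$ in $T$ and set $S_1=\cliquemap{}(u)\cap\cliquemap{}(v)$, $S_2=\cliquemap{}(v)\cap\cliquemap{}(w)$; by Theorem~\ref{thm_minseps_ctedges} both are minimal separators of $H$.

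I would first establish two structural facts. \emph{(i)} $\cliquemap{}(v)\subseteq\cliquemap{}(u)\cup\cliquemap{}(w)$: otherwise some $(A,\chi)\in\cliquemap{}(v)$ lies in neither of its neighbors, and convexity forces $\tr{}(A,\chi)=\{v\}$, giving $V_a=\{v\}$ for every $a\in A$, a contradiction. Consequently $\cliquemap{}(v)=S_1\cup S_2$. \emph{(ii)} Every pair $(A,\chi)\in S_1\setminus S_2$, $(A',\chi')\in S_2\setminus S_1$ is a fill edge of $H$: if instead the cells shared an element $a$, then since $\tr{}(A,\chi)$ lies in the component of $T-\{v\}$ containing $u$ together with $v$, and $\tr{}(A',\chi')$ lies in the $w$-component together with $v$, their intersection is exactly $\{v\}$. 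Applying the Helly property to these two subtrees together with any $\tr{}(A'',\chi'')$ for which $a\in A''$ (all pairwise intersecting through $a$) shows $v\in\tr{}(A'',\chi'')$, hence $V_a\subseteq\tr{}(A,\chi)\cap\tr{}(A',\chi')=\{v\}$, again contradicting the hypothesis.

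With (i) and (ii) in hand, I form $H'$ from $H$ by deleting every edge between $S_1\setminus S_2$ and $S_2\setminus S_1$. By (ii) these are all fill edges, so $\pig{\pc}\subseteq H'$; and the containment $H'\subsetneq H$ is strict because maximality of the clique $\cliquemap{}(v)$ makes both $S_1\setminus S_2$ and $S_2\setminus S_1$ nonempty. To see $H'$ is chordal, I build a tree representation $(T^\ast,\cliquemap^\ast)$ by splitting $v$ into two adjacent nodes $v_1,v_2$, with $v_1$ taking over the edge to $u$ and $v_2$ the edge to $w$, and setting $\cliquemap^\ast(v_1)=S_1$, $\cliquemap^\ast(v_2)=S_2$, with $\cliquemap^\ast$ unchanged elsewhere. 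Convexity of the image of each vertex of $\pig{\pc}$ in $T^\ast$ follows by case analysis on whether the vertex lies outside $\cliquemap{}(v)$, in $S_1\cap S_2$, in $S_1\setminus S_2$, or in $S_2\setminus S_1$; each case is routine using (i). The pairs that no longer share a node of $T^\ast$ are precisely those between $S_1\setminus S_2$ and $S_2\setminus S_1$, so $T^\ast$ represents exactly $H'$ and $H'$ is chordal by Theorem~\ref{thm_chordal_treereps}. This contradicts the minimality of $H$, completing the proof.

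The main obstacle is the Helly argument in (ii), which must weave pairwise intersections of three subtrees together with the observation $\tr{}(A,\chi)\cap\tr{}(A',\chi')=\{v\}$ to collapse $V_a$ down to $\{v\}$; once that is set up, constructing $T^\ast$ and verifying convexity of the split representation is mechanical but needs care to confirm that the only destroyed adjacencies are those between $S_1\setminus S_2$ and $S_2\setminus S_1$.
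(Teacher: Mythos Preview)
Your argument is correct. It is, however, genuinely different from the paper's route. The paper does not build a smaller triangulation; instead it cites a structural dichotomy (from \cite{GLG12}) that at any degree-two node $v$ either some vertex of $\pig{\pc}$ appears in $\cliquemap(v)$ alone, or some \emph{pair} of vertices appears together only in $\cliquemap(v)$, and then invokes Lemma~\ref{lem_mintri_fill_ms} (every fill edge of a minimal triangulation lies in a minimal separator, hence in $\cliquemap(u_1)\cap\cliquemap(u_2)$ for some clique-tree edge) to force that pair to be an original edge of $\pig{\pc}$, so its cells share an element whose unique candidate is $v$. Your proof and the paper's are essentially contrapositives of one another at the key step: the paper shows the ``only-at-$v$'' pair must be an original edge (or minimality is contradicted via Lemma~\ref{lem_mintri_fill_ms}), while you show every cross pair in $(S_1\setminus S_2)\times(S_2\setminus S_1)$ must be a fill edge (or $V_a=\{v\}$), and then you contradict minimality by hand via the node-splitting representation $(T^\ast,\cliquemap^\ast)$. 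What your approach buys is self-containment: you never need the external fill-edge/minimal-separator lemma. What the paper's approach buys is brevity: once Lemma~\ref{lem_mintri_fill_ms} is available, the construction of $H'$ and the convexity check for $T^\ast$ are unnecessary. Both arguments in fact establish the same stronger statement, namely that every degree-two node of $T$ is the \emph{unique} candidate node of some $a\in X$.
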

\begin{proof}
We have already seen that every leaf of $T$ is the unique candidate node of some element of $X$.
In addition to this, it was also shown in \cite{GLG12} that if $u$ is a node of $\tr{}$ of degree two, then $u$ is the unique candidate node of some element of $X$.
This was done by showing that $\cliquemap(u)$ contains either:
\begin{description}
	\item[1.]	a vertex $(A_1,\chi_1)$ of $\pig{\pc}$ that is not contained in $\cliquemap(w)$ for any other node $w \neq u$ of $T$; or
	\item[2.]	an edge $(A_2,\chi_2) (A_3,\chi_3)$ of $\pig{\pc}$, whose incident vertices have cells with non-empty intersection, and are not both contained in $\cliquemap(w)$ for any other node $w \neq u$ of $T$.
\end{description}

For completeness, we outline a proof here.
Using convexity and the fact that $u$ has degree two, it follows that either a unique vertex or unique pair of vertices are contained in $\cliquemap(u)$.
It remains to show that $(A_2,\chi_2) (A_3,\chi_3)$ is actually an edge of $\pig{\pc}$ (so $A_2 \cap A_3$ is non-empty).
If not, then by Lemma \ref{lem_mintri_fill_ms} there is a minimal separator $S$ of $H$ containing both $(A_2,\chi_2)$ and $(A_3,\chi_3)$.
By Theorem \ref{thm_minseps_ctedges}, there is an edge $u_1u_2$ of $T$ such that $S = \cliquemap{}(u_1) \cap \cliquemap{}(u_2)$.
But this contradicts case 2, so it must be that $(A_2,\chi_2) (A_3,\chi_3)$ is an edge of $\pig{\pc}$.

In both cases, $u$ is the unique candidate node of some element in $X$ (this element is either $a \in A_1$ or $a \in A_2 \cap A_3$), so every degree two node of $T$ is labeled by $\phi$, and there are no nodes of $T$ that need to be suppressed.
Hence the underlying tree of $\xt{}$ is $T$.
\qed\end{proof}

\begin{lemma}\label{lem_mintri_subtrees}
Let $H$ be a minimal triangulation of $\pig{\pc}$, $\treerep{}$ be a clique tree of $H$, and suppose $\tr{}$ induces $\xt{}$.
Then for each vertex $(A,\chi)$ of $\pig{\pc}$, $\xt{}(A) = \tr{}(A,\chi)$.
\end{lemma}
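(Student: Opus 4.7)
The plan is to prove the two inclusions $\xt{}(A) \subseteq \tr{}(A,\chi)$ and $\tr{}(A,\chi) \subseteq \xt{}(A)$ separately. The first is immediate from the construction of $\xt{}$: for each $a \in A$, the candidate node $\phi(a)$ satisfies $\cliquemap(\phi(a)) \supseteq \{(A',\chi') : a \in A'\}$, and in particular contains $(A,\chi)$, so $\phi(a) \in \tr{}(A,\chi)$. Since $\tr{}(A,\chi)$ is a connected subtree of $T$ containing all of $\phi(A)$, the minimal spanning subtree $\xt{}(A)$ lies within it.

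For the reverse inclusion, since a subtree equals the minimal subtree spanned by its leaves, it suffices to show that every leaf $v$ of $\tr{}(A,\chi)$ (in the subtree sense) lies in $\xt{}(A)$. I will prove the stronger statement that there exists $a \in A$ for which $v$ is the \emph{unique} candidate node; this forces $\phi(a) = v$ for any induced $X$-tree, placing $v \in \phi(A) \subseteq \xt{}(A)$. Writing $V(a) = \{u \in V(T) : \cliquemap(u) \supseteq \{(A',\chi') : a \in A'\}\} = \bigcap_{a \in A'} \tr{}(A',\chi')$ for the candidate set of $a$, the goal is to find $a \in A$ with $V(a) = \{v\}$.

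If $\tr{}(A,\chi) = \{v\}$, pick any $a \in A$: for every $(A',\chi')$ with $a \in A'$, the $\pig{\pc}$-edge $(A,\chi)(A',\chi')$ (whose cells share $a$) forces $\tr{}(A',\chi')$ to intersect $\tr{}(A,\chi) = \{v\}$, so $v \in \tr{}(A',\chi')$, yielding $v \in V(a) \subseteq \tr{}(A,\chi) = \{v\}$. Otherwise, $v$ has a unique neighbor $w$ in $\tr{}(A,\chi)$. By the clique-tree property that every edge $uu'$ of a clique tree has some $x \in \cliquemap(u) \setminus \cliquemap(u')$ (noted just after Theorem \ref{thm_chordal_treereps}), choose $(A^*,\chi^*) \in \cliquemap(v) \setminus \cliquemap(w)$. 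The key structural step is that $\tr{}(A,\chi) \cap \tr{}(A^*,\chi^*) = \{v\}$: the intersection is a subtree containing $v$, but $w \notin \tr{}(A^*,\chi^*)$ and every other neighbor of $v$ in $T$ lies outside $\tr{}(A,\chi)$ (as $v$ is a leaf of that subtree), so no neighbor of $v$ can be in the intersection.

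The main obstacle is confirming that $(A,\chi)(A^*,\chi^*)$ is an edge of $\pig{\pc}$ rather than merely a fill edge of $H$. If it were a fill edge, Lemma \ref{lem_mintri_fill_ms} would yield a minimal separator $S$ of $H$ containing both endpoints, and Theorem \ref{thm_minseps_ctedges} would express $S = \cliquemap(u_1) \cap \cliquemap(u_2)$ for some edge $u_1u_2$ of $T$, forcing two distinct nodes $u_1, u_2$ into $\tr{}(A,\chi) \cap \tr{}(A^*,\chi^*) = \{v\}$, a contradiction. Hence $A \cap A^* \neq \emptyset$; pick $a \in A \cap A^*$. For any $(A',\chi')$ with $a \in A'$, the three subtrees $\tr{}(A,\chi), \tr{}(A^*,\chi^*), \tr{}(A',\chi')$ pairwise intersect (the first two at $\{v\}$, the others via the $\pig{\pc}$-edges at $a$), so by the Helly property they share a common node, which must lie in $\tr{}(A,\chi) \cap \tr{}(A^*,\chi^*) = \{v\}$. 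Thus $v \in \tr{}(A',\chi')$ for every such $(A',\chi')$, giving $v \in V(a)$; combined with $V(a) \subseteq \tr{}(A,\chi) \cap \tr{}(A^*,\chi^*) = \{v\}$, we conclude $V(a) = \{v\}$, completing the proof.
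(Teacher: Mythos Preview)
Your proof is correct, but it takes a different route from the paper's. For the inclusion $\xt{}(A)\subseteq\tr{}(A,\chi)$ both arguments are the same. For the reverse inclusion the paper argues by direct contradiction with minimality: assuming $\xt{}(A)\subsetneq\tr{}(A,\chi)$, it replaces the single subtree $\tr{}(A,\chi)$ by $\xt{}(A)$ while leaving all other subtrees unchanged, checks that the resulting intersection graph $H'$ is still a triangulation of $\pig{\pc}$, and then exhibits an edge of $H$ (incident to $(A,\chi)$ across the boundary node of $\xt{}(A)$ in $\tr{}(A,\chi)$) that is absent in $H'$, contradicting minimality of $H$.

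Your argument instead establishes the stronger structural fact that every leaf of the subtree $\tr{}(A,\chi)$ is the \emph{unique} candidate node for some $a\in A$, by picking $(A^*,\chi^*)\in\cliquemap(v)\setminus\cliquemap(w)$, invoking Lemma~\ref{lem_mintri_fill_ms} and Theorem~\ref{thm_minseps_ctedges} to force $(A,\chi)(A^*,\chi^*)$ to be an edge of $\pig{\pc}$, and then applying Helly. This is exactly the mechanism used in the paper's proof of Lemma~\ref{lem_mintri_underlying_tree}, so your approach has the virtue of reusing that machinery uniformly; it also yields the extra information that the leaves of $\tr{}(A,\chi)$ are forced choices for $\phi$, not merely possible ones. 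The paper's approach, on the other hand, is more self-contained (it does not cite Lemma~\ref{lem_mintri_fill_ms}) and makes the role of minimality transparent by exhibiting the offending smaller triangulation explicitly. One small point: your argument tacitly relies on Lemma~\ref{lem_mintri_underlying_tree} so that $\xt{}(A)$ and $\tr{}(A,\chi)$ live in the same tree $T$; since that lemma precedes this one, this is fine, but it is worth making the dependence explicit.
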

\begin{proof}
Let $(A,\chi)$ be a vertex of $\pig{\pc}$ and consider a node $v$ of $\xt{}(A)$.
Either $v = \phi(a)$ for some $a \in A$ or $v$ lies between $\phi(a_1)$ and $\phi(a_2)$ for some $a_1, a_2 \in A$.
In the first case, $(A,\chi) \in \cliquemap{}(v)$ because $v$ is a candidate node for $a$.
Similarly, in the second case, $(A,\chi)$ is an element of both $\cliquemap{}(\phi(a_1))$ and $\cliquemap{}(\phi(a_2))$, and therefore $(A,\chi) \in \cliquemap{}(v)$ by convexity.
In both cases $v$ is a node of $\tr{}(A,\chi)$, so $\xt{}(A) \subseteq \tr{}(A,\chi)$.

To finish proving equality, suppose that $\xt{}(A) \subset \tr{}(A,\chi)$.
Define a tree representation $\treerep{'}$ of a graph $H'$ as follows: set $T' = T$, and define subtrees $\tr{'}(A',\chi')$ of $T$ for each vertex $(A',\chi')$ of $\pig{\pc}$ as follows:
\begin{description}
	\item[1.] $\tr{'}(A',\chi') = \tr{}(A',\chi')$ if $(A',\chi') \neq (A,\chi)$, and
	\item[2.] $\tr{'}(A',\chi') = \xt{}(A)$ if $(A',\chi') = (A,\chi)$.
\end{description}
We have already seen that $\tr{'}(A',\chi') \subseteq \tr{}(A',\chi')$ for every vertex $(A',\chi')$ of $\pig{\pc}$, so the edge set of $H'$ is a subset of the edge set of $H$.
If $(A_1,\chi_1)(A_2,\chi_2)$ is an edge of $\pig{\pc}$, then $A_1$ and $A_2$ have at least one element $a$ in common, and thus $\xt{}(A_1)$ and $\xt{}(A_2)$ intersect at $\phi(a)$.
Further, $\xt{}(A') \subseteq \tr{'}(A',\chi')$ for each vertex $(A',\chi')$ of $\pig{\pc}$, so $\tr{'}(A_1,\chi_1)$ and $\tr{'}(A_2,\chi_2)$ also intersect at $\phi(a)$.
Therefore $(A_1,\chi_1)(A_2,\chi_2)$ is an edge of $H'$, and $H'$ is chordal by Theorem \ref{thm_chordal_treereps}, so it is a triangulation of $\pig{\pc}$.

To complete the proof, we show that $H'$ must have an edge that does not exist in $H$.
Because $\xt{}(A) \subset \tr{}(A,\chi)$, there is a node $u$ of $\tr{}(A,\chi) - \xt{}(A)$ that is adjacent to a node $w$ of $\xt{}(A) = \tr{'}(A,\chi)$.
By maximality, there is a vertex $(A',\chi') \in \cliquemap{}(u) - \cliquemap{}(w)$, and because $u$ is a node of $\tr{}(A,\chi)$, $(A,\chi) \in \cliquemap{}(u)$.
Therefore $(A',\chi')(A,\chi)$ is an edge of $H$.
The situation in $H'$ is different: if $(A',\chi')(A,\chi)$ is an edge of $H'$, then $\tr{'}(A,\chi)$ and $\tr{'}(A',\chi')$ intersect at a node $v'$.
But if $v'$ is a node of $\tr{'}(A,\chi)$, then $w$ is on the path from $v'$ to $u$, and by convexity this would imply that $(A',\chi') \in \cliquemap{}(w)$.
Hence $(A',\chi')(A,\chi)$ is not an edge of $H'$, so the edge set of $H'$ is a proper subset of the edge set of $H$.
This is impossible because $H$ is a minimal triangulation of $\pig{\pc}$, so it must be that $\xt{}(A) = \tr{}(A,\chi)$ for each vertex $(A,\chi)$ of $\pig{\pc}$.
\qed\end{proof}

Lemmas \ref{lem_mintri_equals_xtint}, \ref{lem_mintri_underlying_tree}, and \ref{lem_mintri_subtrees} are summarized below.

\begin{theorem}\label{thm_mintri_induced_xt}
Let $H$ be a minimal triangulation of $\pig{\pc}$, $\tr{}$ be a clique tree of $H$, and suppose $\tr{}$ induces $\xt{}$.
Then the underlying tree of $\xt{}$ is the underlying tree of $\tr{}$, and for each vertex $(A,\chi)$ of $\pig{\pc}$, $\xt{}(A) = \tr{}(A,\chi)$.
Further, $H = \pig{\pc,\xt{}}$.
\end{theorem}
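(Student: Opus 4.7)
The plan is to observe that Theorem \ref{thm_mintri_induced_xt} is essentially a bookkeeping consolidation of the three preceding lemmas: its hypotheses match those of Lemmas \ref{lem_mintri_equals_xtint}, \ref{lem_mintri_underlying_tree}, and \ref{lem_mintri_subtrees} verbatim, and its three conclusions are exactly the three conclusions of those lemmas. So the proof should simply concatenate them, checking only that the statements line up without any hidden reinterpretation.

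First I would apply Lemma \ref{lem_mintri_underlying_tree} to conclude that the underlying tree of $\xt{}$ equals $T$, the underlying tree of $\tr{}$. This is the structurally important step of the consolidation, because in general the inducing construction may suppress unlabeled degree-two nodes; minimality of $H$ is what ensures each such node is a unique candidate node for some element of $X$, and hence is labelled. With this identification in hand, the equality $\xt{}(A) = \tr{}(A,\chi)$ from Lemma \ref{lem_mintri_subtrees} makes sense as an equality of subtrees of the same ambient tree, and I would invoke it directly for every vertex $(A,\chi)$ of $\pig{\pc}$.

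Finally, Lemma \ref{lem_mintri_equals_xtint} yields $H = \pig{\pc,\xt{}}$, completing the statement. There is no real obstacle: the hard work has already been done in the three lemmas, and what remains is only to note that the hypotheses of the theorem are exactly the conjunction of their hypotheses, so each conclusion transfers without modification. Consequently the proof can be stated in a single sentence, pointing to Lemmas \ref{lem_mintri_equals_xtint}, \ref{lem_mintri_underlying_tree}, and \ref{lem_mintri_subtrees}.
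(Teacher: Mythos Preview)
Your proposal is correct and matches the paper's own treatment exactly: the paper introduces Theorem \ref{thm_mintri_induced_xt} with the sentence ``Lemmas \ref{lem_mintri_equals_xtint}, \ref{lem_mintri_underlying_tree}, and \ref{lem_mintri_subtrees} are summarized below'' and gives no separate proof. Your observation that Lemma \ref{lem_mintri_underlying_tree} is what makes the equality in Lemma \ref{lem_mintri_subtrees} an equality of subtrees of the same ambient tree is a nice clarification, but otherwise there is nothing to add.
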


\section{Maximal Defining Subsets of Characters}

This section is devoted to the proof of Theorem \ref{mainthm_subset_urchordal_defining}.
Its proof will follow mainly from Propositions \ref{mainprop_2} and \ref{mainprop_1}.
Recall that, for a graph $H$ and a subset $U$ of its vertices, the graph $H - U$ is obtained by removing the vertices in $U$ and edges of $H$ incident to one or more vertices of $U$.

\begin{lemma}\label{lem_subtri_to_supertri}
Let $\pc$ be a set of partial characters and $\pc' \subseteq \pc$.
Suppose $H'$ is a minimal triangulation of $\pig{\pc'}$, and let $U$ be the vertices of $\pig{\pc}$ that are not vertices of $\pig{\pc'}$.
Then there is a minimal triangulation $H$ of $\pig{\pc}$ such that $H' = H - U$.
\end{lemma}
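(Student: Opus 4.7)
The plan is to extend $H'$ from the smaller vertex set $V(\pig{\pc'})$ up to the full vertex set $V(\pig{\pc}) = V(\pig{\pc'}) \cup U$ through an auxiliary $X$-tree, and then to trim the resulting chordal supergraph back down to a minimal triangulation of $\pig{\pc}$ without disturbing any edge among vertices of $\pig{\pc'}$.

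Concretely, I first pick any clique tree $\tr{'}$ of $H'$ (which exists by Theorem \ref{thm_chordal_treereps}) and let $\xt{'}$ be an $X$-tree induced by $\tr{'}$ via the construction of Section 3. Because $\pc'$ is itself a set of partial characters on $X$ and $H'$ is a minimal triangulation of $\pig{\pc'}$, Theorem \ref{thm_mintri_induced_xt} applied with $\pc'$ in place of $\pc$ yields $H' = \pig{\pc',\xt{'}}$. I then form $\pig{\pc,\xt{'}}$, which by Observation \ref{obs_derived_properties} is a triangulation of $\pig{\pc}$. The key structural fact I would record is
\[
\pig{\pc,\xt{'}} - U \;=\; \pig{\pc',\xt{'}} \;=\; H',
\]
because both $\pig{\pc,\xt{'}} - U$ and $\pig{\pc',\xt{'}}$ have vertex set $V(\pig{\pc'})$, and in both of them two vertices $(A,\chi)$ and $(A',\chi')$ are adjacent precisely when $\xt{'}(A)$ and $\xt{'}(A')$ intersect; this condition depends only on $\xt{'}$ and the cells themselves and is insensitive to whether one indexes by $\pc$ or $\pc'$.

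With $\pig{\pc,\xt{'}}$ in hand, I let $H$ be any minimal triangulation of $\pig{\pc}$ with $\pig{\pc} \subseteq H \subseteq \pig{\pc,\xt{'}}$, obtained by greedily deleting fill edges of $\pig{\pc,\xt{'}}$ while preserving chordality until none can be removed. To finish, I verify $H - U = H'$. Since $H$ is chordal, so is $H - U$, and $H - U \supseteq \pig{\pc} - U = \pig{\pc'}$, so $H - U$ is a triangulation of $\pig{\pc'}$; together with $H - U \subseteq \pig{\pc,\xt{'}} - U = H'$ this gives the chain $\pig{\pc'} \subseteq H - U \subseteq H'$, and the minimality of $H'$ then forces $H - U = H'$. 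The main point that needs care is the identity $\pig{\pc,\xt{'}} - U = \pig{\pc',\xt{'}}$, which I expect to be the only delicate step, but it is really a direct unpacking of the definition of $\pig{\cdot,\xt{'}}$; everything else is routine given the machinery of Section 3.
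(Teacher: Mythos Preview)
Your proof is correct, but it takes a genuinely different route from the paper's. The paper constructs an auxiliary chordal supergraph $H^*$ of $\pig{\pc}$ purely graph-theoretically: it adds to $\pig{\pc}$ the fill edges of $H'$ together with \emph{every} edge incident to a vertex of $U$, so that each vertex of $U$ becomes universal. Chordality of $H^*$ is then checked directly by a two-case analysis on cycles (either the cycle avoids $U$ and lies in $H'$, or it hits a universal vertex and acquires a chord). From there the paper trims $H^*$ down to a minimal triangulation $H$ and uses minimality of $H'$ exactly as you do.

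Your argument replaces this brute-force extension by a structured one: you realize $H'$ as $\pig{\pc',\xt{'}}$ via an induced $X$-tree (Lemma~\ref{lem_mintri_equals_xtint} / Theorem~\ref{thm_mintri_induced_xt}), and then extend to $\pig{\pc,\xt{'}}$, whose restriction to $V(\pig{\pc'})$ is $H'$ by an immediate unpacking of the definition. The final trimming step and the appeal to minimality of $H'$ are identical in both proofs. The paper's approach is more elementary and self-contained (it needs nothing from Section~3), while yours is more in keeping with the paper's $X$-tree machinery and produces a tighter intermediate supergraph than the ``make $U$ universal'' construction; both are equally valid and of comparable length.
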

\begin{proof}
	Let $H'$ be a minimal triangulation of $\pig{\pc'}$ and $H^*$ be the graph obtained by adding the following fill edges to $\pig{\pc}$:
\begin{description}
	\item[1.] the fill edges of $H'$; and
	\item[2.] fill edges of the form $(A,\chi)(A',\chi')$ where $(A,\chi)$ is a vertex of $U$, and $(A',\chi')$ is any vertex of $\pig{\pc}$.
\end{description}
First we prove that $H^*$ is chordal, and then we will use it to construct $H$, a minimal triangulation of $\pig{\pc}$ such that $H - U = H'$.
	
Let $(A_1,\chi_1), (A_2,\chi_2), \ldots, (A_k,\chi_k)$ be a cycle in $H^*$.
If $\chi_i \in \pc'$ for all $i = 1, 2, \ldots, k$, then this cycle is also a cycle of $H'$, and therefore has a chord that is an edge of $H'$.
Each edge of $H'$ is also an edge of $H^*$, so this cycle has a chord in $H^*$.
Otherwise, without loss of generality, $\chi_1 \in \pc - \pc'$ and $(A_1,\chi_1) \in U$ so either $(A_1,\chi_1)(A_3,\chi_3)$ is an edge of $\pig{\pc}$ or is a fill edge of $H^*$ of type \textbf{2}.
In either case, this cycle has a chord, so $H^*$ is chordal.

Now let $H$ be any minimal triangulation of $\pig{\pc}$ such that the edge set of $H$ is a subset of the edge set of $H^*$.
Every edge of $H - U$ is either an edge of $\pig{\pc'}$ or is an edge of $H'$ by the construction of $H^*$ and $H$.
Therefore the edge set of $H - U$ is a subset of the edge set of $H'$.
Further, $H - U$ is chordal because any cycle of $H - U$ is a cycle of $H$ (i.e.\ chordality is \emph{inherited} \cite{R70}), so it is a triangulation of $\pig{\pc'}$.
By minimality of $H'$, it must be that the edge set of $H - U$ is equal to the edge set of $H'$, so $H' = H - U$.
\qed\end{proof}

\begin{lemma}\cite{GG11} see also \cite{BHS05}\label{lem_legalcharacters_mintri}
Let $\pc$ be a set of partial characters on $X$ and suppose that $\pc'$ is a compatible subset of $\pc$.
Then there is a minimal triangulation of $\pig{\pc}$ whose displayed characters are at least $\pc'$.
\end{lemma}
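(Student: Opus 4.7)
The plan is to combine Theorem \ref{thm_buneman} with Lemma \ref{lem_subtri_to_supertri} to produce the required minimal triangulation, and then verify that no character of $\pc'$ gets broken in the process. Since $\pc'$ is compatible, Theorem \ref{thm_buneman} guarantees that $\pig{\pc'}$ has a proper minimal triangulation, call it $H'$. This $H'$ is a minimal triangulation of the subgraph induced by the vertices of $\pig{\pc}$ whose characters lie in $\pc'$, so I can feed it into Lemma \ref{lem_subtri_to_supertri} with $U$ taken to be the vertex set of $\pig{\pc}$ not belonging to $\pig{\pc'}$. This yields a minimal triangulation $H$ of $\pig{\pc}$ with the property that $H - U = H'$.

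The remaining step is to argue that each $\chi \in \pc'$ is a displayed character of $H$. Suppose for contradiction that some $\chi \in \pc'$ is broken by $H$, so $H$ contains a fill edge $(A,\chi)(A',\chi)$ with $A$ and $A'$ distinct cells of $\chi$. Because $\chi \in \pc'$, both endpoints $(A,\chi)$ and $(A',\chi)$ are vertices of $\pig{\pc'}$, so neither lies in $U$. Consequently $(A,\chi)(A',\chi)$ survives in $H - U = H'$. But $A \cap A' = \emptyset$ since $A$ and $A'$ are distinct cells of the same character $\chi$, so $(A,\chi)(A',\chi)$ is not an edge of $\pig{\pc'}$; it must therefore be a fill edge of $H'$. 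This contradicts $H'$ being a proper triangulation, since a proper triangulation does not introduce fill edges between two vertices whose character is the same. Hence no $\chi \in \pc'$ is broken, and the displayed characters of $H$ form a superset of $\pc'$.

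The only subtlety I foresee is being careful about the distinction between $\pig{\pc}$-edges and fill edges: one must use the fact that distinct cells of a single character are non-adjacent in the partition intersection graph (an observation made near the definition of $\pig{\pc}$) in order to conclude that the offending edge in the contradictory case is genuinely a fill edge of $H'$. Once that is in hand, the argument is immediate, and no additional machinery beyond Theorem \ref{thm_buneman} and Lemma \ref{lem_subtri_to_supertri} is needed.
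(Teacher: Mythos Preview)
Your argument is correct. The paper does not supply a proof of this lemma at all; it simply imports the statement from \cite{GG11} and \cite{BHS05}. Your approach is therefore not a comparison case but an actual self-contained derivation within the paper's own framework: you invoke Theorem~\ref{thm_buneman} to obtain a proper minimal triangulation of $\pig{\pc'}$, lift it to a minimal triangulation of $\pig{\pc}$ via Lemma~\ref{lem_subtri_to_supertri}, and then observe that any edge breaking a character of $\pc'$ would survive in $H-U=H'$ and violate properness. Each step is sound, and the ``subtlety'' you flag (that distinct cells of the same character are non-adjacent in $\pig{\pc'}$, so the offending edge is genuinely a fill edge of $H'$) is exactly the right thing to check. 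Nothing further is needed.
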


Though not stated in this form, the following lemma is a direct result of Lemma 5.1 in \cite{BHS05} and its proof.

\begin{lemma}\label{lem_triangulation_cliquetree_displays_legalcharacters}
Let $\pc$ be a set of partial characters on $X$.
Suppose $H$ is a triangulation of $\pig{\pc}$ with displayed characters $\pc'$.
Then if $\xt{}$ is induced by a clique tree of $H$, it is a perfect phylogeny for $\pc'$.
\end{lemma}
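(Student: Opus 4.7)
The plan is to show that the induced $X$-tree $\xt{}$ displays each $\chi \in \pc'$; by definition this is exactly what it means for $\xt{}$ to be a perfect phylogeny for $\pc'$. Fix $\chi \in \pc'$ and distinct cells $A, A'$ of $\chi$. Since vertices of $\pig{\pc}$ sharing the same character are non-adjacent in $\pig{\pc}$, and since $\chi$ is displayed by $H$ (so no fill edge of $H$ has the form $(A,\chi)(A',\chi)$), the vertices $(A,\chi)$ and $(A',\chi)$ are non-adjacent in $H$. By the edge coverage property of the clique tree $\treerep{}$ of $H$, this implies that the subtrees $\tr{}(A,\chi)$ and $\tr{}(A',\chi)$ share no node of $T$.

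Next I would transfer this separation statement from $\tr{}$ to $\xt{}$. For each $a \in A$, the candidate node $\phi(a)$ satisfies $(A,\chi) \in \cliquemap{}(\phi(a))$, so $\phi(a)$ is a node of $\tr{}(A,\chi)$. Since $\tr{}(A,\chi)$ is a (connected) subtree of $T$ containing all of $\phi(A)$, the minimal subtree $T_A$ of $T$ containing $\phi(A)$ is a subtree of $\tr{}(A,\chi)$. Likewise $T_{A'} \subseteq \tr{}(A',\chi)$. Combining this with the previous paragraph gives $T_A \cap T_{A'} = \emptyset$ in $T$.

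Finally I would argue that the suppression of unlabeled degree-two nodes used to pass from $T$ to the underlying tree $T'$ of $\xt{}$ cannot create new intersections between these minimal subtrees. Any node of $T_A$ is either a labeled node (hence survives in $T'$) or an internal node of $T_A$, and in either case its status as a node of the minimal subtree containing $\phi(A)$ is preserved after suppression; symbolically, $V(\xt{}(A)) = V(T_A) \cap V(T')$, and similarly for $A'$. Therefore
\[
V(\xt{}(A)) \cap V(\xt{}(A')) = (V(T_A) \cap V(T_{A'})) \cap V(T') = \emptyset,
\]
so $\xt{}(A)$ and $\xt{}(A')$ do not intersect in $\xt{}$. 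Because $A$ and $A'$ were arbitrary distinct cells of $\chi$ and $\chi$ was an arbitrary member of $\pc'$, this shows that $\xt{}$ displays every character in $\pc'$, i.e., it is a perfect phylogeny for $\pc'$.

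The only subtle step is the last one, where one must be careful that suppression of degree-two unlabeled nodes neither introduces spurious common nodes between the minimal subtrees nor distorts which nodes of $T$ actually lie in the minimal subtree of $T'$ spanning $\phi(A)$; this is a purely topological observation about how edge contractions act on subtrees, and is the only place where the difference between the underlying tree of $\tr{}$ and that of $\xt{}$ enters the argument.
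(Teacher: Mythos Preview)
Your argument is correct. The paper does not actually prove this lemma; it merely remarks that the statement ``is a direct result of Lemma~5.1 in \cite{BHS05} and its proof.'' So your self-contained proof is genuinely additional content relative to what the paper supplies.

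Your approach is exactly the natural one: use the fact that for a displayed character $\chi$ the vertices $(A,\chi)$ and $(A',\chi)$ remain non-adjacent in $H$, translate this via the clique tree into disjointness of the subtrees $\tr{}(A,\chi)$ and $\tr{}(A',\chi)$, and then push the disjointness down to the minimal subtrees $T_A$ and $T_{A'}$ spanned by $\phi(A)$ and $\phi(A')$ using the containments $T_A\subseteq\tr{}(A,\chi)$ and $T_{A'}\subseteq\tr{}(A',\chi)$. The final step, passing from $T$ to the underlying tree $T'$ of $\xt{}$, is the only place that requires care, and your claim $V(\xt{}(A))=V(T_A)\cap V(T')$ is correct: suppressing an unlabeled degree-two node is an edge contraction, and for any two surviving nodes the $T'$-path between them visits precisely the surviving nodes of the corresponding $T$-path. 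Since every node of $\phi(A)$ is labeled and hence survives, the minimal subtree of $T'$ spanning $\phi(A)$ is exactly the image of $T_A$ under these contractions. Your prose explanation of this step (``either a labeled node \ldots\ or an internal node of $T_A$'') is slightly informal, but the conclusion you draw from it is right, and you correctly flag this as the one delicate point.
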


\begin{proposition}\label{mainprop_2}
Let $\pc$ be a set of partial characters on $X$ and $\pc' \subseteq \pc$.
Suppose that the following conditions hold:
	\begin{description}
	\item[(i)] $\pig{\pc}$ has a unique minimal triangulation $H$ that has $\pc'$ as its displayed characters, and no other minimal triangulation of $\pig{\pc}$ has at least $\pc'$ as its displayed character set;
	\item[(ii)] $H$ is a ternary ur-chordal graph with leafage $|X|$; and
	\item[(iii)] each edge of $H$'s unique clique tree is incontractable with respect to $\pc'$.
	\end{description}
Then $\pc'$ is a maximal defining subset of $\pc$.
\end{proposition}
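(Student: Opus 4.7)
The plan is to reduce the claim to Theorem \ref{mainthm_semple_steel} by exhibiting a free ternary perfect phylogeny $\xt{}$ for $\pc'$ that is distinguished by $\pc'$, and then separately verifying uniqueness of the proper minimal triangulation of $\pig{\pc'}$. Let $\treerep{}$ be the unique clique tree of $H$, which exists since $H$ is ur-chordal by (ii), and let $\xt{}$ be an $X-$tree induced by $\tr{}$. Lemma \ref{lem_triangulation_cliquetree_displays_legalcharacters} gives that $\xt{}$ is a perfect phylogeny for $\pc'$ (the displayed characters of $H$), and Theorem \ref{thm_mintri_induced_xt} identifies the underlying tree of $\xt{}$ with $T$ and the subtree $\xt{}(A)$ with $\tr{}(A,\chi)$ for every vertex $(A,\chi)$ of $\pig{\pc}$. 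Ternarity of $\xt{}$ then follows from the ternarity of $H$, and condition (iii) translates, via the subtree identification, into the statement that every edge of $\xt{}$ is distinguished by $\pc'$.

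To secure freeness of $\xt{}$, I will use a short counting argument. The construction of an induced $X-$tree guarantees that every leaf of $T$ is labeled by $\phi$, so $\phi(X)$ contains every leaf of $T$. Since $T$ has $|X|$ leaves by (ii) and $|\phi(X)| \le |X|$, $\phi(X)$ must equal the set of leaves of $T$ and $\phi$ must be a bijection onto them, which is exactly freeness.

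With a free ternary perfect phylogeny for $\pc'$ distinguished by $\pc'$ established, condition \textbf{(b)} of Theorem \ref{mainthm_semple_steel} applied to $\pc'$ is in hand, and it remains to show that $\pig{\pc'}$ has a unique proper minimal triangulation. Existence follows from Theorem \ref{thm_buneman} via compatibility of $\pc'$. For uniqueness, suppose $H_1'$ and $H_2'$ are proper minimal triangulations of $\pig{\pc'}$. Lemma \ref{lem_subtri_to_supertri} lifts each to a minimal triangulation $H_i$ of $\pig{\pc}$ with $H_i' = H_i - U$, where $U$ is the set of vertices of $\pig{\pc}$ absent from $\pig{\pc'}$. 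Any fill edge of $H_i$ that breaks some $\chi \in \pc'$ has both endpoints being vertices of character $\chi$, hence both lie in $\pig{\pc'}$ and survive in $H_i'$; propriety of $H_i'$ forbids this, so $\pc'$ is contained in the displayed character set of $H_i$. Condition (i) then forces $H_1 = H_2 = H$, hence $H_1' = H_2'$, and Theorem \ref{mainthm_semple_steel} delivers that $\pc'$ defines a perfect phylogeny.

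Finally, for maximality, suppose there were a compatible $\pc''$ with $\pc' \subsetneq \pc'' \subsetneq \pc$. Lemma \ref{lem_legalcharacters_mintri} would then yield a minimal triangulation $\hat H$ of $\pig{\pc}$ whose displayed character set contains $\pc''$, and hence strictly contains $\pc'$; but by (i) the only minimal triangulation of $\pig{\pc}$ whose displayed character set contains $\pc'$ is $H$ itself, whose displayed set equals $\pc'$ and cannot contain the strictly larger $\pc''$. This contradiction establishes maximality. The most delicate step in the plan is freeness of $\xt{}$, but it dissolves into the leafage count above, so I expect no serious obstruction in executing this sketch.
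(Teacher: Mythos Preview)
Your proposal is correct and follows essentially the same approach as the paper: induce $\xt{}$ from the unique clique tree of $H$, use Theorem~\ref{thm_mintri_induced_xt} to transfer ternarity, leafage, and incontractability into the corresponding properties of $\xt{}$, lift proper minimal triangulations of $\pig{\pc'}$ via Lemma~\ref{lem_subtri_to_supertri} to invoke (i), and rule out larger compatible subsets via Lemma~\ref{lem_legalcharacters_mintri}. The only difference is cosmetic---you treat condition \textbf{(b)} before \textbf{(a)} and spell out the freeness counting argument that the paper leaves as a one-line remark---but the logic is the same.
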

\begin{proof}
We begin by showing that $\pc'$ has a unique perfect phylogeny using Theorem \ref{mainthm_semple_steel}, and finish the proof showing that no superset of $\pc'$ has a unique perfect phylogeny.

Throughout the proof, $H$ will denote the unique minimal triangulation of $\pig{\pc}$ given by \textbf{(i)} whose displayed character set is $\pc'$.
By Lemma \ref{lem_triangulation_cliquetree_displays_legalcharacters}, $\pc'$ has a perfect phylogeny so it is compatible.
There is a proper triangulation of $\pig{\pc'}$ by Theorem \ref{thm_buneman}, so $\pig{\pc'}$ has a proper minimal triangulation as well.
To see that condition \textbf{(a)} of Theorem \ref{mainthm_semple_steel} holds, suppose that $H'_1$ and $H'_2$ are proper minimal triangulations of $\pig{\pc'}$ given by Theorem \ref{thm_buneman}, and let $U$ be the set of vertices of $\pig{\pc}$ not in $\pig{\pc'}$.
By Lemma \ref{lem_subtri_to_supertri}, there are minimal triangulations $H_1$ and $H_2$ of $\pig{\pc}$ that satisfy $H'_1 = H_1 - U$ and $H'_2 = H_2 - U$.
The displayed characters of $H_1$ and $H_2$ must be at least $\pc'$, because any fill edge that breaks a character of $\pc'$ would also appear in $H'_1$ or $H'_2$, and both $H'_1$ and $H'_2$ are proper triangulations of $\pig{\pc'}$.
By \textbf{(i)}, we have $H_1 = H = H_2$.
Therefore $H'_1 = H - U = H'_2$, so condition \textbf{(a)} of Theorem \ref{mainthm_semple_steel} is satisfied with respect to $\pc'$.

%%%%%Using SS Thm
Now we show condition \textbf{(b)} of Theorem \ref{mainthm_semple_steel} holds.
Let $\tr{}$ be the unique clique tree of $H$ given by \textbf{(ii)}, and suppose $\xt{}$ is the $X-$tree induced by $\tr{}$.
$\xt{}$ displays $\pc'$ by Lemma \ref{lem_triangulation_cliquetree_displays_legalcharacters}, and by Theorem \ref{thm_mintri_induced_xt} the underlying tree $T$ of $\tr{}$ is also the underlying tree of $\xt{}$.
Further, $T$ is ternary and has $|X|$ leaves by \textbf{(ii)}, so $\xt{}$ must be free and ternary.
To see that $\xt{}$ is distinguished by $\pc'$, consider an edge $uv$ of $T$.
By \textbf{(iii)} $uv$ is incontractible with respect to $\pc'$, so there is a character $\chi \in \pc'$ and distinct cells $A$ and $A'$ of $\chi$ such that $u$ is a node of $\tr{}(A,\chi)$ and $v$ is a node of $\tr{}(A',\chi)$.
But $\xt{}(A) = \tr{}(A,\chi)$ and $\xt{}(A') = \tr{}(A',\chi)$ by Theorem \ref{thm_mintri_induced_xt}, so $\chi$ distinguishes $uv$.
Hence condition \textbf{(b)} of Theorem \ref{mainthm_semple_steel} also holds with respect to $\pc'$, so $\pc'$ defines $\xt{}$.

Last, we show that no proper superset of $\pc'$ also defines an $X-$tree.
If any superset $\pc^*$ of $\pc'$ was compatible, then by Lemma \ref{lem_legalcharacters_mintri} some minimal triangulation of $\pig{\pc}$ has at least $\pc^*$ as its displayed character set.
This would contradict \textbf{(i)}, so no such superset can exist.
This completes the proof.
\qed\end{proof}

\begin{lemma}\label{lem_maxdef_uniqueness}
Suppose $\pc'$ is a maximal defining subset of $\pc$, and $H$, $H'$ are minimal triangulations of $\pig{\pc}$ with $\pc'$ as its displayed character set. 
Then $H = H'$.
\end{lemma}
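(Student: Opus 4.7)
The plan is to exploit the hypothesis that $\pc'$, being a maximal defining subset, has a perfect phylogeny that is unique up to isomorphism, and to recover both $H$ and $H'$ from any such phylogeny via the construction $\pig{\pc,\cdot}$. The key ingredients are Lemmas \ref{lem_triangulation_cliquetree_displays_legalcharacters} and \ref{lem_mintri_equals_xtint} from Section 3.

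First I would choose clique trees of $H$ and $H'$ and let $\xt{}$, $\xt{'}$ denote the $X-$trees they respectively induce. Since $H$ and $H'$ both have $\pc'$ as their displayed characters, Lemma \ref{lem_triangulation_cliquetree_displays_legalcharacters} guarantees that $\xt{}$ and $\xt{'}$ are each perfect phylogenies for $\pc'$. The hypothesis that $\pc'$ defines a perfect phylogeny, combined with Theorem \ref{mainthm_semple_steel}, then forces $\xt{} \isomorphic \xt{'}$.

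Second, since $H$ and $H'$ are minimal triangulations of $\pig{\pc}$, Lemma \ref{lem_mintri_equals_xtint} gives $H = \pig{\pc,\xt{}}$ and $H' = \pig{\pc,\xt{'}}$. To conclude $H = H'$ I would verify that $\pig{\pc,\xt{}} = \pig{\pc,\xt{'}}$ whenever $\xt{} \isomorphic \xt{'}$: both graphs share the vertex set of $\pig{\pc}$, and a label-preserving graph isomorphism $\psi$ between the underlying trees of $\xt{}$ and $\xt{'}$ carries $\xt{}(A)$ onto $\xt{'}(A)$ for every cell $A$, so the intersection patterns of cell-subtrees coincide and the edge sets agree.

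I do not expect a substantial obstacle here. The delicate point is simply translating isomorphism of $X-$trees into literal equality of triangulations of a common graph; this is immediate because $\pig{\pc,\cdot}$ depends only on the cell-subtree intersection pattern, which is an isomorphism invariant.
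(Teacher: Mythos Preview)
Your proposal is correct and follows essentially the same approach as the paper: induce $X$-trees from clique trees of $H$ and $H'$, use Lemma~\ref{lem_triangulation_cliquetree_displays_legalcharacters} together with the defining hypothesis to get $\xt{}\isomorphic\xt{'}$, and then deduce $H=H'$ via $H=\pig{\pc,\xt{}}$ and $H'=\pig{\pc,\xt{'}}$. The paper unwinds the last step slightly more explicitly (using Theorem~\ref{thm_mintri_induced_xt} to identify $\tr{}(A,\chi)$ with $\xt{}(A)$ and then tracking a specific intersection node through $\psi$), but this is exactly your observation that $\psi$ carries each $\xt{}(A)$ onto $\xt{'}(A)$ and hence preserves the cell-subtree intersection pattern.
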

\begin{proof}
Let $\xt{}$ be an $X-$tree induced by a clique tree $\tr{}$ of $H$ and $\xt{'}$ be an $X-$tree induced by a clique tree $\tr{'}$ of $H'$.
By Lemma \ref{lem_triangulation_cliquetree_displays_legalcharacters}, $\tr{}$ and $\tr{'}$ are perfect phylogenies for $\pc'$, and because $\pc'$ defines an $X-$tree it must be that $\xt{} \isomorphic \xt{'}$ via isomorphism $\psi$.
Additionally, for each vertex $(A,\chi)$ of $\pig{\pc}$ we have $\tr{}(A,\chi) = \xt{}(A)$ and $\tr{'}(A,\chi) = \xt{'}(A)$ by Theorem \ref{thm_mintri_induced_xt}.

To prove that $H = H'$, it suffices to show that their fill edge sets are the same.
Suppose that $(A_1,\chi_1)(A_2,\chi_2)$ is a fill edge of $H$.
By the edge coverage property of clique trees, $\tr{}(A_1,\chi_1)$ and $\tr{}(A_2,\chi_2)$ intersect at a node $v$ of $T$.
We will show that $\tr{'}(A_1,\chi_1)$ and $\tr{'}(A_2,\chi_2)$ intersect at $\psi(v)$.
If there is an $a \in A_1$ such that $\phi(a) = v$, then $\psi(v) = \phi'(a)$ is a node of $\xt{'}(A_1) = \tr{'}(A_1,\chi_1)$.
Otherwise there are $a_1, a_2 \in A_1$ and $v$ is an internal node on the path from $\phi(a_1) = v_1$ to $\phi(a_2) = v_2$.
Because $\psi$ is a graph isomorphism, $\psi(v)$ is an internal node on the path from $\psi(v_1)$ to $\psi(v_2)$.
Further, $\psi(v_1)$ and $\psi(v_2)$ are nodes of $\xt{'}(A_1) = \tr{'}(A_1,\chi_1)$, so $\psi(v)$ is a node of $\tr{'}(A_1,\chi_1)$ as well.
In both cases $\psi(v)$ is a node of $\tr{'}(A_1,\chi_1)$, and a similar argument shows that $\psi(v)$ is a node of $\tr{'}(A_2,\chi_2)$.
Therefore $\tr{'}(A_1,\chi_1)$ and $\tr{'}(A_2,\chi_2)$ intersect at $\psi(v)$, so $(A_1,\chi_1)(A_2,\chi_2)$ is a fill edge of $H'$, and the fill edge set of $H$ is a subset of the fill edge set of $H'$.
A symmetric argument shows that the fill edge set of $H$ is a subset of the fill edge set of $H'$, so these fill edge sets must be equal, completing the proof.
\qed\end{proof}

\begin{proposition}\label{mainprop_1}
Let $\pc$ be a set of partial characters on $X$ and $\pc'$ be a maximal defining subset of $\pc$.
Then the following conditions hold:
	\begin{description}
	\item[(i)] $\pig{\pc}$ has a unique minimal triangulation $H$ that has $\pc'$ as its displayed characters, and no other minimal triangulation of $\pig{\pc}$ has at least $\pc'$ as its displayed character set;
	\item[(ii)] $H$ is a ternary ur-chordal graph with leafage $|X|$; and
	\item[(iii)] each edge of $H$'s unique clique tree is incontractable with respect to $\pc'$.
	\end{description}
Further, if $\pc'$ defines $\xt{}$, then $\pig{\pc,\xt{}} = H$.
\end{proposition}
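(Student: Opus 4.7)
The plan is to show that $H := \pig{\pc,\xt{}}$ is the graph whose existence satisfies (i)--(iii), where $\xt{}$ is the perfect phylogeny defined by $\pc'$. Since $\pc'$ is a maximal defining subset, $\xt{}$ exists, and Theorem \ref{mainthm_semple_steel} applied to $\pc'$ tells us that $\xt{}$ is free, ternary, and distinguished by $\pc'$. With $H$ chosen this way, the ``Further'' clause is immediate by construction.

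The structural content of (ii) and (iii) falls out of the preparatory results. Theorem \ref{thm_almostdefine_urchordal}, applied to $\xt{}$ and $\pc' \subseteq \pc$, says that $H$ is uniquely representable, its unique clique tree is the tree representation $\tr{}$ derived from $\xt{}$, and its displayed characters are exactly $\pc'$. By Observation \ref{obs_derived_properties}, the underlying tree of $\tr{}$ is the underlying tree of $\xt{}$, which is ternary with $|X|$ leaves because $\xt{}$ is free and ternary; this yields the leafage and ternary parts of (ii) once minimality of $H$ is established. For (iii), any edge $uv$ of $\tr{}$ is distinguished by some $\chi \in \pc'$ (since $\xt{}$ is distinguished by $\pc'$), giving distinct cells $A,A'$ of $\chi$ with $u$ a node of $\xt{}(A)$ and $v$ a node of $\xt{}(A')$; Observation \ref{obs_derived_properties} identifies $\xt{}(A) = \tr{}(A,\chi)$ and $\xt{}(A') = \tr{}(A',\chi)$, so $uv$ is incontractable with respect to $\chi$.

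What remains is the simultaneous claim that $H$ is a minimal triangulation of $\pig{\pc}$ and is the unique such minimal triangulation whose displayed character set contains $\pc'$. Let $H^\star$ be an arbitrary minimal triangulation of $\pig{\pc}$ whose displayed characters $\pc^\star$ satisfy $\pc^\star \supseteq \pc'$. By Lemma \ref{lem_triangulation_cliquetree_displays_legalcharacters}, $\pc^\star$ is compatible, so maximality of $\pc'$ forces $\pc^\star = \pc'$; the same lemma then makes any $X$-tree $\xt{^\star}$ induced by a clique tree of $H^\star$ a perfect phylogeny for $\pc'$, so $\xt{^\star} \isomorphic \xt{}$. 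Theorem \ref{thm_mintri_induced_xt} gives $H^\star = \pig{\pc,\xt{^\star}}$, and an isomorphism $\xt{^\star} \to \xt{}$ maps $\xt{^\star}(A)$ onto $\xt{}(A)$ for every cell $A$, so cell intersections are preserved and $\pig{\pc,\xt{^\star}} = \pig{\pc,\xt{}} = H$. Lemma \ref{lem_legalcharacters_mintri} applied to the compatible set $\pc'$ supplies at least one such $H^\star$, and the identification $H^\star = H$ now shows $H$ is itself a minimal triangulation, completing (i) and finishing (ii).

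The main obstacle is the apparent circularity that minimality of $H$ is part of what must be proved, which prevents directly invoking Lemma \ref{lem_maxdef_uniqueness} on $H$. The workaround is to start from an arbitrary minimal triangulation $H^\star$ guaranteed by Lemma \ref{lem_legalcharacters_mintri} and identify it with $H$ via Theorem \ref{thm_mintri_induced_xt} together with the isomorphism-invariance of the $\pig{\pc,\cdot}$ construction, using maximality of $\pc'$ to ensure that the induced $X$-tree really is a perfect phylogeny for $\pc'$ and hence isomorphic to $\xt{}$.
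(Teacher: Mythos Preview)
Your proof is correct, but it is organized in the reverse direction from the paper's. The paper first establishes \textbf{(i)} abstractly: it takes an arbitrary minimal triangulation $H_1$ with displayed characters $\supseteq \pc'$ (Lemma~\ref{lem_legalcharacters_mintri}), uses maximality to force the displayed set to equal $\pc'$, and then applies Lemma~\ref{lem_maxdef_uniqueness} directly to conclude uniqueness. Only after $H$ is in hand does the paper take an $\xt{}$ \emph{induced by a clique tree of $H$}, recognize it as the defined tree, and read off \textbf{(ii)}, \textbf{(iii)}, and $H=\pig{\pc,\xt{}}$ via Theorems~\ref{thm_almostdefine_urchordal} and~\ref{thm_mintri_induced_xt}. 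You instead start from the defined tree $\xt{}$, set $H:=\pig{\pc,\xt{}}$, obtain \textbf{(ii)} and \textbf{(iii)} from Theorem~\ref{thm_almostdefine_urchordal} and Observation~\ref{obs_derived_properties}, and then prove minimality and uniqueness of $H$ by identifying any competing minimal triangulation $H^\star$ with $\pig{\pc,\xt{}}$ through the isomorphism-invariance of the $\pig{\pc,\cdot}$ construction.

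Your perceived circularity with Lemma~\ref{lem_maxdef_uniqueness} is not actually present in the paper's route: that lemma compares two \emph{given} minimal triangulations and never assumes the specific $H$ you want is one of them, so the paper can invoke it before ever naming $H=\pig{\pc,\xt{}}$. What you do instead is effectively inline the proof of Lemma~\ref{lem_maxdef_uniqueness} (the isomorphism carries $\xt{^\star}(A)$ to $\xt{}(A)$, hence preserves subtree intersections) and fuse it with Theorem~\ref{thm_mintri_induced_xt}. The payoff of your ordering is that the ``Further'' clause is immediate by construction rather than emerging at the end; the payoff of the paper's ordering is modularity, since the uniqueness argument is isolated in Lemma~\ref{lem_maxdef_uniqueness} and reused verbatim.
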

\begin{proof}
To see that \textbf{(i)} holds, first observe that $\pc'$ is compatible by definition.
There is a minimal triangulation $H_1$ of $\pig{\pc}$ with at least $\pc'$ as its displayed character set by Lemma \ref{lem_legalcharacters_mintri}.
Because $\pc'$ is a maximal defining subset of $\pc$, there is no $\pc' \subset \pc^* \subseteq \pc$ that is compatible.
By Lemma \ref{lem_triangulation_cliquetree_displays_legalcharacters}, the displayed characters of $H_1$ are compatible, so this set must be exactly $\pc'$.
This is true of any minimal triangulation of $\pig{\pc}$ that has at least $\pc'$ as its displayed characters.
If $H_2$ is such a minimal triangulation, then $H_1 = H_2$ by Lemma \ref{lem_maxdef_uniqueness}, so there is a unique minimal triangulation of $\pig{\pc}$ that has at least $\pc'$ as its displayed characters.
We will refer to this unique minimal triangulation as $H$ in the remainder of the proof.

Now we show that \textbf{(ii)} holds.
Let $\xt{}$ be the $X-$tree induced by a clique tree of $H$.
By Lemma \ref{lem_triangulation_cliquetree_displays_legalcharacters}, $\xt{}$ is a perfect phylogeny for $\pc'$, and since $\pc'$ is a maximal defining subset it must be that $\pc'$ defines $\xt{}$.
Recall that $\xt{}$ is free, ternary, and distinguished by $\pc'$ according to Theorem \ref{mainthm_semple_steel}.
By Theorem \ref{thm_almostdefine_urchordal}, $\pig{\pc,\xt{}}$ is ur-chordal.
On the other hand, $H = \pig{\pc,\xt{}}$ by Theorem \ref{thm_mintri_induced_xt}, so $H$ is ur-chordal as well.
By the same theorem, $H$'s unique clique tree has the same underlying tree as $\xt{}$.
Since $\xt{}$ is ternary, this clique tree must also be ternary, so $H$ is a ternary ur-chordal graph.
This proves statement \textbf{(ii)}.

Now consider condition \textbf{(iii)}, and let $uv$ be an edge of $T$.
By Theorem \ref{mainthm_semple_steel} the edge $uv$ is distinguished by $\pc'$, so there is a character $\chi \in \pc'$ that has cells $A \neq A'$ and $u$ is a node of $\xt{}(A)$ and $v$ is a node of $\xt{}(A')$.
From Theorem \ref{thm_mintri_induced_xt} we see that $\xt{}(A) = \tr{}(A,\chi)$ and $\xt{}(A') = \tr{}(A',\chi)$, so $uv$ is incontractable with respect to $\pc'$.
Hence $\tr{}$ is incontractable with respect to $\pc'$.

The remainder of the theorem was shown while proving \textbf{(ii)} holds.
\qed\end{proof}

\noindent\textit{Proof of Theorem \ref{mainthm_subset_urchordal_defining}.}
Propositions \ref{mainprop_2} and \ref{mainprop_1} show that $\pc'$ is a maximal defining subset of $\pc$ if and only if conditions \textbf{(a)} -- \textbf{(c)} hold.
The fact that $\xt{}$ is free, ternary, and distinguished by $\pc'$ follows by Theorem \ref{mainthm_semple_steel}.
Finally, $\pig{\pc,\xt{}} = H$ due to Proposition \ref{mainprop_1}.
\qed

\vspace{\baselineskip}
\noindent\textit{Proof of Theorem \ref{mainthm_urchordal_defining}.}
Use Theorem \ref{mainthm_subset_urchordal_defining}, taking $\pc' = \pc$.
\qed

\section{Discussion}

We conclude with a brief discussion on the role minimal separators play in minimal triangulation theory \cite{H06}, and how our characterization may contribute towards constructing an algorithm that sometimes finds a maximal defining subset of characters when one exists.
Minimal triangulations have been characterized by their minimal separators, which happen to be minimal separators of the triangulated graph as well \cite{PS97,KKS97}.
Further, a minimal separator of a minimal triangulation has connected components (and full components) that are identical in the graph that has been triangulated \cite{H06}.

Bouchitt{\'e} and Todinca \cite{BT01,BT02} used minimal separators and \emph{potential maximal cliques}, the maximal cliques of minimal triangulations, to create a dynamic programming algorithm to solve the \emph{treewidth} and \emph{minimum-fill} problems in time polynomial in the number of minimal separators of a graph.
This approach was extended to create a dynamic programming algorithm that solves a variety perfect phylogeny problems in \cite{G13}, including the unique perfect phylogeny problem.

Our results elucidate the structure of minimal separators of triangulations associated with maximal defining subsets of characters.
This structure is retained in the partition intersection graph, and is closely related to the structure of potential maximal cliques, because the connected components obtained by removing the vertices in a potential maximal clique have neighborhoods that are minimal separators \cite{BT01}.
This may allow for the computation of a ternary ur-chordal minimal triangulation in time polynomial in the number of minimal separators of $\pig{\pc}$ (or asserting that no ternary ur-chordal minimal triangulations exist), yielding a candidate subset $\pc'$ of $\pc$ that may be a maximal subset of characters.
The number of minimal separators of $\pig{\pc'}$ is bounded by the number of minimal separators of $\pig{\pc}$ (this is a specific example of a more general fact; see Corollary 4 in \cite{BT02}).
Therefore if it is computationally feasible to find $\pc'$ due to $\pig{\pc}$ having a small number of minimal separators, checking if $\pc'$ defines a perfect phylogeny using the method from \cite{G13} may also be feasible.

\section{Acknowledgements}
This research was partially supported by NSF grants IIS-0803564 and CCF-1017580.

\bibliographystyle{plain}
\bibliography{../master}

\end{document}